\newcommand\numberthis{\addtocounter{equation}{1}\tag{\theequation}}
\newcounter{restate}
\newcommand{\cS}{\mathcal{S}}
\newcommand{\OOt}[1]{\tilde{O}\left( #1 \right)}
\newcommand{\bv}{{\mathbf{b}}}
\newcommand{\ev}{{\mathbf{e}}}
\newcommand{\mv}{{\mathbf{m}}}
\newcommand{\sv}{{\mathbf{s}}}
\newcommand{\xv}{{\mathbf{x}}}
\newcommand{\yv}{{\mathbf{y}}}
\newcommand{\Hm}{{\mathbf{H}}}
\newcommand{\Sm}{{\mathbf{S}}}
\newcommand{\mat}[1]{\ensuremath{\boldsymbol{#1}}}
\newcommand{\un}{{\mat{1}}}
\newcommand{\Ic}{{\mathcal I}}
\newcommand{\Lc}{{\mathcal L}}
\newcommand{\SD}{\mathrm{SD}}
\renewcommand{\SS}{\mathrm{SS}}
\newcommand{\SSNZC}{\mathrm{SSNZC}}
\newtheorem{notation}{Notation}
\newcommand{\F}{\mathbb{F}}
\newcommand*{\transp}{{\mathpalette\@transpose{}}}
\newcommand*{\@transpose}[2]{\raisebox{\depth}{$\m@th#1\intercal$}}
\newcommand{\transpose}[1]{{#1}^{\transp}}
\newcommand{\eqdef}{\mathop{=}\limits^{\triangle}}
\newcommand{\DOOM}{\ensuremath{\mathrm{DOOM}}}
\newcommand{\wt}[1]{|#1|}
\newcommand{\zo}{\{0,1\}}
\newcommand{\wave}{Wave}
\newcommand{\WGV}{W_{\textup{GV}}}
\newcommand{\Hmpi}{{\mathbf{H}_{\pi}}}
\newcommand{\Ppl}{{\mathcal{P}_{p,\ell}}}
\newcommand{\PGESS}{{\text{PGE+SS} }}
\newcommand\bin[2]{\begin{pmatrix}#1\\#2\end{pmatrix}}
\DeclareMathOperator{\abs}{abs}
\DeclareMathOperator{\Nrep}{Nrep}
\newcommand{\COMMENT}[1]{}
\begin{document}
	
\title{Ternary Syndrome Decoding with Large Weight}
\author{R\'emi Bricout \inst{1,2} \and Andr\'e Chailloux \inst{2} \and Thomas Debris-Alazard\inst{1,2} \and Matthieu Lequesne \inst{1,2}}
\institute{Sorbonne Universit\'{e}s, UPMC Univ Paris 06 \and Inria, Paris\\
	\email{\{remi.bricout, andre.chailloux, thomas.debris, matthieu.lequesne\}@inria.fr}}
\maketitle

\begin{abstract}
The Syndrome Decoding problem is at the core of many code-based
cryptosystems.  In this paper, we study ternary Syndrome Decoding in
large weight. This problem has been introduced in the Wave signature
scheme but has never been thoroughly studied. We perform an
algorithmic study of this problem which results in an update of the
Wave parameters. On a more fundamental level, we show that ternary
Syndrome Decoding with large weight is a really harder problem than
the binary Syndrome Decoding problem, which could have several
applications for the design of code-based cryptosystems.
\end{abstract} 

\paragraph{Keywords.} Post-quantum cryptography, Syndrome Decoding problem, Subset Sum algorithms.

\section{Introduction}
Syndrome decoding is one of the oldest problems used in coding theory
and cryptography \cite{M78}. It is known to be NP-complete
\cite{BMT78} and its average case variant is still believed to be hard
forty years after it was proposed, even against quantum
computers. This makes code-based cryptography a credible candidate for
post-quantum cryptography. There has been numerous proposals of
post-quantum cryptosystems based on the hardness of the Syndrome
Decoding (SD) problem, some of which were proposed for the NIST
standardization process for quantum-resistant cryptographic
schemes. Most of them are qualified for the second round of the
competition \cite{BIKE,ACPTT17,HQC17,BBCPS19,BCLMNPPSSSW17}.  It is
therefore a significant task to understand the computational hardness
of the Syndrome Decoding problem.

Informally, the Syndrome Decoding problem is stated as follows. Given
a matrix $\Hm\in\F_{q}^{(n-k)\times n}$, a vector $\sv\in\F_{q}^{n-k}$
and a weight $w \in \llbracket 0,n \rrbracket$, the goal is to find a
vector $\ev\in\F_{q}^n$ such that $\Hm\transpose{\ev}=\transpose{\sv}$
and $\wt{\ev}=w$, where $\wt{\ev}$ denotes the Hamming weight, namely
$\wt{\ev} = |\{i : \ev_i \neq 0\}|$. The binary case, \textit{i.e.}
when $q = 2$ has been extensively studied. Even before this problem
was used in cryptography, Prange \cite{P62} constructed a clever
algorithm for solving the binary problem using a method now referred
to as Information Set Decoding (ISD).

\subsection{Binary vs. Ternary Case}

The binary case of the SD problem has been thoroughly studied. Its
complexity is always studied for relative weight $W := \frac{w}{n} \in
[0,0.5]$ because the case $W > 0.5$ is equivalent (see Remark
$\ref{rem:symetry}$ in Section \ref{sec:genFrameSD}). However, this
argument is no longer valid in the general case ($q \geq 3$). Indeed,
the large weight case does not behave similarly to the small weight
case, as we can see on Figure \ref{fig:binary-ternary}.

\begin{figure} \centering
  \includegraphics[width=.8\textwidth]{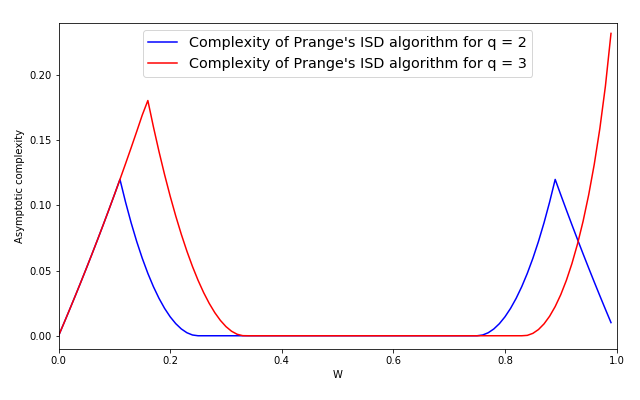}
  \caption{Asymptotic complexity of Prange's ISD algorithm for $R :=
\frac{k}{n} = 0.5$.}
  \label{fig:binary-ternary}
\end{figure}

The general case $q \geq 3$ has received much less attention than the
binary case. One possible explanation for this is that there were no
cryptographic applications for the general case. This has recently
changed. Indeed, a new signature scheme named \wave\ was recently
proposed in \cite{DST18}, based on the difficulty of SD on a ternary
alphabet and with large weight. This scheme makes uses of the new
regime of large weight induced by the asymmetry of the ternary
case. Therefore, in addition to the algorithmic interest of studying
the general syndrome decoding problem, the results of this study can be
applied to a real cryptosystem.

Another reason why the general case $q \geq 3$ has been less studied
is that the Hamming weight measure becomes less meaningful as $q$
grows larger.  Indeed, the Hamming weight only counts the number of
non-zero elements but not their repartition. Hence, the weight loses a
significant amount of information for large values of $q$. Therefore,
$q = 3$ seems to be the best candidate to understand the structure of
the non-binary case without losing too much information.

\subsection{State of the Art for $q \geq 3$}

Still, there exist some interesting results concerning the SD problem
in the general $q$-ary case. Coffey and Goodman \cite{CG90} were the
firsts to propose a generalization of Prange's ISD algorithm to
$\F_q$. Following this seminal work, most existing ISD algorithms were
extended to cover the $q$-ary case.  In 2010, Peters \cite{P10}
generalized Stern's algorithm. In his dissertation thesis, Meurer
\cite{M12} generalized the BJMM algorithm. Hirose \cite{H16} proposed
a generalization of Stern's algorithm with May-Ozerov's approach
(using nearest neighbors) and showed that for $q \geq 3$ this does not
improve the complexity compared to Stern's classical approach. Later,
Gueye, Klamti and Hirose \cite{GKH17} extended the BJMM algorithm with
May-Ozerov's approach and improved the complexity of the general SD
problem.  A result from Canto-Torres \cite{C17a} proves that all
ISD-based algorithms converge to the same asymptotic complexity when
$q \to \infty$.  Finally, a recent work \cite{IKRRW18} proposed a
generalization of the ball-collision decoding over $\F_q$.

All these papers focus solely on the SD problem for relative weight $W
< 0.5$.  None of them mentions the case of large weight. The claimed
worst case complexities in these papers should be understood as the
worst case complexity for the SD problem with relative weight $W <
0.5$, but as we can see on Figure \ref{fig:binary-ternary} the highest
complexity is actually reached for large relative weight.

\subsection{Our Contributions}

Our contribution consists in a general study of ternary syndrome
decoding with large weight. We first focus on the Wave signature
scheme \cite{DST18} and present the best known algorithmic attack on
this scheme. We then look more generally at the hardest instances of
the ternary syndrome decoding with large weight and show that this
problem seems significantly harder than the binary variant, making it
a potentially very interesting problem for code-base cryptography.

\paragraph{The PGE+SS framework.} A first minor contribution consists
in a modular description of most ISD-based algorithms. All these
algorithms contain two steps. First, performing a partial Gaussian
elimination (PGE), and then, solving a variant of the Subset Sum
problem (SS). This was already implicitly used in previous papers but
we want to make it explicit to simplify the analysis and hopefully
make those algorithms easier to understand for non-specialists.

\paragraph{Ternary SD with large weight.} We then study specifically
the SD problem in the ternary case and for large weights. From our
modular description, we can focus only on finding many solutions of a
specific instance of the Subset Sum problem. At a high level, we
combine Wagner's algorithm \cite{W02} and representation techniques
\cite{BCJ11,BJMM12} to obtain our algorithm. Our first takeaway is
that, while representations are very useful to obtain a unique
solution (as in \cite{BCJ11}), there are some drawbacks in using them
to obtain many solutions. These drawbacks are strongly mitigated in
the binary case as in \cite{BJMM12} but it becomes much harder for
larger values of $q$. We manage to partially compensate this by
changing the moduli size, the place and the number of representations.
For instance, for the \wave\ \cite{DST18} parameters, we derive an
algorithm that is a Wagner tree with seven floors where the last two
floors have partial representations and the others have none.

\paragraph{New parameters for \wave.}We then use our algorithms to
study the complexity of the \wave\ signature scheme, for which we
significantly improve the original analysis. We show that the key
sizes of the original scheme presented for 128 bits of security have
to be more than doubled, going roughly from 1Mb to 2.2Mb, to achieve
the claimed security. This requires to study the Decode One Out of
Many (DOOM) problem, on which {\wave} actually relies. This problem
corresponds to a multiple target SD problem. More precisely, given $N$
syndromes $(\sv_1,\ldots,\sv_N)$ ($N$ can be large, for example $N =
2^{64}$) the goal is to find an error $\ev$ of Hamming weight $w$ and
an integer $i$ such that $\ev\transpose{\Hm} = \sv_i$. 

\paragraph{Hardest instances of the ternary SD with large weight.}

Next, we look at the hardest instances of the ternary SD with large
weight problem. We study the standard ISD algorithms and show that for
all of them, the hardest instances occur for $R \approx 0.369$ and
$W = 1$ (still in the case $q = 3$). Unsurprisingly, for equivalent
code length and dimension, ternary syndrome decoding is harder than
its binary counterpart. But this is due to the fact that the input
matrix contains more information, since its elements are in $\F_3$,
hence the input size is $\log_2(3)$ times larger than a binary matrix
with equivalent dimensions.

A more surprising conclusion of our work is that ternary syndrome
decoding is significantly harder than the binary case \emph{for
  equivalent input size}, that is, when normalizing the exponent by a
factor $\log_2(q)$. This new result is in sharp contrast with all the
previous work on $q$-ary syndrome decoding that showed that the
problem becomes simpler as $q$ increases. This is due to the fact that
all the previous literature only considered the small weight case
while we now take large weights into account.

Table \ref{table:size1} represents the minimum input size for which the
underlying syndrome decoding problem offers $128$ bits of security,
\textit{i.e.} the associated algorithm needs at least $2^{128}$
operations to solve the problem.

\begin{table}[h!]  \centering
	\begin{tabular}{c c c} \toprule \quad Algorithm \quad & \quad
$q=2$ \quad & \quad $q = 3$ and $W > 0.5$ \quad \\ \midrule Prange &
275 & 44 \\ Dumer/Wagner & 295 & 83 \\ BJMM/Our algorithm & 374 & 99
\\ \bottomrule
        \end{tabular} \vspace*{+0.3cm}
	\caption{Minimum input sizes (in kbits) for a time complexity
of $2^{128}$. \label{table:size1}}
\end{table}

We want to stress again that those input sizes in the ternary case
take into account the fact that the matrix elements are in $\F_3$. So
the increase in efficiency is quite significant and the ternary SD
could efficiently replace its binary counterpart when looking for a
hard code-based problem.

\subsection{Notations}
\label{sec:nota} 

We define here some notations that will be used throughout the
paper. The notation $x \eqdef y$ means that $x$ is defined to be equal
to $y$.  We denote by $\mathbb{F}_{q} = \{0,1,\cdots,q-1\}$ the finite
field of size $q$. Vectors will be written with bold letters (such as
$\ev$) and uppercase bold letters will be used to denote matrices
(such as $\Hm$). Vectors are in row notation.  Let $\xv$ and $\yv$ be
two vectors, we will write $(\xv,\yv)$ to denote their concatenation. Finally, we denote by $\llbracket a,b \rrbracket$ the set $\{\tilde{a},\tilde{a}+1,\dots,\tilde{b}\}$ where $\tilde{a} = \lfloor a \rfloor$ 
and $\tilde{b} = \lfloor b \rfloor$.

\section{A General Framework for Solving the Syndrome Decoding Problem} 
\label{sec:genFrameSD}
\subsection{The Syndrome Decoding Problem}

The goal of this paper is to study the Syndrome Decoding problem,
which is at the core of most code-based cryptosystems.

\begin{restatable}{problem}
{problemSD}[Syndrome Decoding - $\SD(q,R,W)$]
\label{prob:SDH}

\begin{tabular}{ll} 
  Instance: &
              \quad $\Hm\in\F_{q}^{(n-k)\times n}$ of full rank,\\ 
              & \quad $\sv\in\F_{q}^{n-k}$
              (usually called the \emph{syndrome}). \\ 
  Output: & \quad $\ev\in\F_{q}^n$ such
            that $\wt{\ev}=w$ and $\ev\transpose{\Hm} =\sv$,
\end{tabular} $ \ $
 
where $k \eqdef \lceil Rn\rceil$, $w \eqdef \lceil W n\rceil$ 
and $\wt{\ev} \eqdef |\{i : \ev_i \neq0\}|$.
\end{restatable}

The problem $\SD(q,R,W)$ is parametrized by the field size $q$, the
rate $R \in [0,1]$ and the relative weight $W \in [0,1]$. We are
always interested in the average case complexity (as a function of
$n$) of this problem, where $\Hm$ is chosen uniformly at random and
$\sv$ is chosen uniformly from the set
$\{\ev\transpose{\Hm} : |\ev| = w \}$. This ensures the existence of a
solution for each input and corresponds to the typical situation in
cryptanalysis. More generally, the following proposition gives the
average expected number of solutions

\begin{proposition}\label{propo:GVdist} Let $n,k,w$ be integers with
	$k \leq n$ and $\sv \in \F_q^{n-k}$. The expected number of
	solutions of $\ev\transpose{\Hm} = {\sv}$ in $\ev$ of weight $w$
	when $\Hm$ is chosen uniformly at random in $\F_q^{(n-k)\times n}$
	is given by:
	\begin{displaymath}
	\frac{\binom{n}{w}(q-1)^{w}}{q^{n-k}}.
	\end{displaymath}
\end{proposition}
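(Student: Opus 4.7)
The plan is to compute the expectation by summing over all candidate vectors $\ev$ of weight $w$ and using linearity of expectation together with a pairwise uniformity argument for $\ev\transpose{\Hm}$.

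First, I would count the size of the candidate set. The number of vectors $\ev \in \F_q^n$ with $\wt{\ev} = w$ is $\binom{n}{w}(q-1)^w$: one chooses the support (which $w$ coordinates are nonzero) and then, independently on each coordinate of the support, assigns one of the $q-1$ nonzero values of $\F_q$.

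Next, for any fixed nonzero $\ev \in \F_q^n$, I would argue that the random variable $\ev\transpose{\Hm}$ is uniformly distributed on $\F_q^{n-k}$ when $\Hm$ is drawn uniformly from $\F_q^{(n-k)\times n}$. Indeed, each coordinate of $\ev\transpose{\Hm}$ is an independent inner product between $\ev$ and a fresh uniform row of $\Hm$; since $\ev \neq 0$, such an inner product is uniform on $\F_q$, and the coordinates are mutually independent because the rows of $\Hm$ are independent. Therefore $\prob[\ev\transpose{\Hm} = \sv] = 1/q^{n-k}$ for every fixed nonzero $\ev$ and every fixed $\sv \in \F_q^{n-k}$.

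Finally, I would apply linearity of expectation. Writing the number of solutions as $N = \sum_{\ev : \wt{\ev} = w} \mathbf{1}_{\ev\transpose{\Hm} = \sv}$ and noting that $w \geq 1$ ensures every such $\ev$ is nonzero (the case $w=0$ is trivial and, when $\sv = 0$, gives the value $1 = \binom{n}{0}(q-1)^0/q^{n-k}$ with the convention $0^0=1$), we obtain
\begin{displaymath}
\esp[N] \;=\; \sum_{\ev : \wt{\ev} = w} \prob[\ev\transpose{\Hm} = \sv] \;=\; \binom{n}{w}(q-1)^w \cdot \frac{1}{q^{n-k}}.
\end{displaymath}
There is no real obstacle here; the only point to be careful about is the uniformity of $\ev\transpose{\Hm}$, which relies on $\ev$ being nonzero and on the independence of the rows of $\Hm$. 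Note in particular that the statement holds for \emph{every} fixed $\sv$, not only on average over $\sv$.
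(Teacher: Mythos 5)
Your proof is correct and follows essentially the same route as the paper's: count the $\binom{n}{w}(q-1)^w$ candidate vectors, observe that $\ev\transpose{\Hm}$ is uniform on $\F_q^{n-k}$ for fixed nonzero $\ev$, and conclude by linearity of expectation; you merely spell out the uniformity argument that the paper leaves implicit. The only blemish is the parenthetical about $w=0$, where the claimed identity $1 = \binom{n}{0}(q-1)^0/q^{n-k}$ is false for $n>k$ (that degenerate case is simply outside the intended scope, as the paper's own proof signals by writing $\ev \neq \mathbf{0}$), but this does not affect the argument.
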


\begin{proof} 
	This is simple combinatorics. The numerator
	corresponds to the number of vectors $\ev'$ of weight $w$. The
	denominator corresponds to the inverse of the probability over $\Hm$ that
	$\ev\transpose{\Hm} = \transpose{\sv}$ for $\ev \neq \mathbf{0}$. \qed
\end{proof}

\begin{remark}
  The matrix length $n$ is not considered as a parameter of the
  problem since we are only interested in the asymptopic complexity,
  that is the coefficient $F(q,R,W)$ (which does not depend on $n$)
  such that the complexity of the Syndrome Decoding problem for a
  matrix of size $n$ can be expressed as $2^{n(F(q,R,W)+o(1))}$.
\end{remark}

\subsubsection{State of the Art on $\F_2$.}

This problem was mostly studied in the case $q=2$. Depending on the
parameters $R$ and $W$, the complexity of the problem can greatly
vary. Let us fix a value $R$, and let $\WGV$ denote the
Gilbert-Varshamov bound, that is $\WGV \eqdef h_2^{-1}(1-R)$ where
$h_2$ is the binary entropy function restricted to the input space
$\left[0,\frac{1}{2}\right]$. For $W \in \left[0,\frac{1}{2}\right]$,
there exist three different regimes.

\begin{enumerate}
\item $W \approx \WGV$. When $W$ is close to $\WGV$, there is on
  average a small number of solutions. This is the regime where the
  problem is the hardest and where it is the most studied. To the best
  of our knowledge, we only know two code-based cryptosystems in this
  regime, namely the CFS signature scheme \cite{CFS01} and the
  authentication scheme of Stern \cite{S93}.

\item $W \gg \WGV$. In this case, there are on average exponentially
  many solutions and this makes the problem simpler. When $W$ reaches
  $\frac{1-R}{2}$, the problem can be solved in average polynomial
  time using Prange's algorithm \cite{P62}. There is a cryptographic
  motivation to consider $W$ much larger than $\WGV$, for instance to
  build signatures schemes following the \cite{GPV08} paradigm as it
  was done in \cite{DST17a} but one has to be careful to not make
  $\SD$ too simple.
  
\item $W \ll \WGV$. In this regime, we have with high probability a
  unique solution. However, the search space, \textit{i.e.} the set of
  vectors $\ev$ st. $\wt{\ev} = \lceil W n\rceil$ is much smaller than
  in the other regimes. The original McEliece system \cite{M78} or the
  QC-MDPC systems \cite{MTSB12} are in this regime.
\end{enumerate}

\begin{remark}
\label{rem:symetry}
Solving $\SD(2,R,W)$ for $W \in \left[\frac{1}{2},1\right]$ and the
instance $(\Hm,\sv)$ can be reduced to one of the above-mentioned
cases using $\SD(2,R,1-W)$ and the instance
$(\Hm,\sv + \mathbf{1}\transpose{\Hm})$ where $\mathbf{1}$ denotes the
vector with all its components equal to $1$.
\end{remark}

\begin{remark}
\label{rem:asymetry}
Contrary to the binary case, when $q \geq 3$ the case of large
relative weight can not be reduced to that of small relative weight
using the trick of Remark \ref{rem:symetry}. In fact, the problem has a
quite different behavior in small and large weights, see Figure
\ref{fig:binary-ternary}.
\end{remark}

\subsection{The \PGESS Framework in $\F_q$} 
\label{subsec:PGESS} 

The $\SD$ problem has been extensively studied in the binary
case. Most algorithms designed to solve this problem
\cite{D91,MMT11,BJMM12} follow the same framework:
\begin{enumerate}
\item perform a partial Gaussian elimination (PGE) ;
\item solve the Subset Sum problem (SS) on a reduced instance. 
\end{enumerate}

We will see how we can extend this framework to the non-binary case.
Our goal here is to describe the \PGESS framework for solving
$\SD(q,R,W)$.  Fix $\Hm\in\F_{q}^{(n-k)\times n}$ of full rank and
$\sv\in\F_{q}^{n-k}$. Recall that we want to find $\ev\in\F_{q}^n$
such that $\wt{\ev}= w \eqdef\ \lceil W n \rceil$ and
$\Hm\transpose{\ev}=\transpose{\sv}$. Let us introduce $\ell$ and $p$,
two parameters of the system, that we will consider fixed for now. In
this framework, an algorithm for solving $\SD(q,R,W)$ will consist of
$4$ steps: a permutation step, a partial Gaussian Elimination step, a
Subset Sum step and a test step.

\begin{enumerate}
\item \emph{Permutation step.} Pick a random permutation $\pi$. Let
  $\Hmpi$ be the matrix $\Hm$ where the columns have been permuted
  according to $\pi$. We now want to solve the problem $\SD(q,R,W)$ on
  inputs $\Hmpi$ and $\sv$.

\item \emph{Partial Gaussian Elimination step.} If the top left square
  submatrix of $\Hmpi$ of size $n-k-\ell$ is not of full rank, go back
  to step $1$ and choose another random permutation $\pi$. This
  happens with constant probability. Else, if this submatrix is of
  full rank, perform a Gaussian elimination on the rows of $\Hmpi$
  using the first $n-k-\ell$ columns. Let
  $\Sm \in \F_q^{(n-k)\times (n-k)}$ be the invertible matrix
  corresponding to this operation. We now have two matrices
  $\Hm' \in \F_q^{(n-k-\ell) \times (k+\ell)}$ and
  $\Hm'' \in \F_q^{\ell \times (k+\ell)}$ such that:
  \[
  \Sm\Hmpi = 
  \begin{pmatrix} 
    \un_{n-k-\ell} & \Hm' \\ 
    \mathbf{0} & \Hm''
  \end{pmatrix}.
  \]
 
  The error $\ev$ can be written as $\ev = (\ev',\ev'')$ where
  $\ev' \in \F_q^{n-k-\ell}$ and $\ev'' \in \F_q^{k+\ell}$, and one
  can write $\sv\transpose{\Sm} = (\sv',\sv'')$ with
  $\sv' \in \F_q^{n-k-\ell}$ and $\sv'' \in \F_q^{\ell}$.

  \begin{align*} 
    \Hmpi\transpose{\ev} = \transpose{\sv} 
    & \iff \Sm\Hm_\pi\transpose{\ev} = \Sm\transpose{\sv} \\
    & \iff 
      \begin{pmatrix} 
        \un_{n-k-\ell} & \Hm' \\ 
        \mathbf{0} & \Hm''
      \end{pmatrix} 
      \begin{pmatrix} 
        \transpose{\ev'} \\
        \transpose{\ev''} 
      \end{pmatrix} 
      = 
      \begin{pmatrix} 
        \transpose{\sv'} \\
        \transpose{\sv''} 
      \end{pmatrix} \\ 
    &\iff 
      \left\{
      \begin{array}{ll} 
        \transpose{\ev'} + \Hm'\transpose{\ev''} = \transpose{\sv'} \\ 
        \Hm''\transpose{\ev''} = \transpose{\sv''}
      \end{array} 
      \right. 
      \numberthis \label{EQ1}
  \end{align*} 

  To solve the problem, we will try to find a solution $(\ev',\ev'')$
  to the above system such that $|\ev''| = p$ and $|\ev'| = w-p$.

\item \emph{The Subset Sum step.}  Compute a set
  $\cS \subseteq \F_q^{k + \ell}$ of solutions $\ev''$ of
  $\Hm''\transpose{\ev''} = \transpose{\sv''}$ such that
  $|\ev''| = p$. We will solve this problem by considering it as a
  Subset Sum problem as it is described in Subsection
  \ref{subsec:redSubsetSum}.

\item \emph{The test step.} Take a vector $\ev'' \in \cS$ and let
  $ \transpose{\ev'} = \transpose{\sv'} -
  \Hm'\transpose{\ev''}$.
  Equation \eqref{EQ1} ensures that
  $\Hmpi\transpose{(\ev',\ev'')} = \transpose{\sv}$.  If
  $|\ev'| = w-p$, $\ev = (\ev',\ev'')$ is a solution of $\SD(q,R,W)$
  on inputs $\Hmpi$ and $\sv$, which can be turned into a solution of
  the initial problem by permuting the indices, as detailed in
  Equation \eqref{eq:permutation}. Else, try again for other values of
  $\ev'' \in \cS$. If no element of $\cS$ gives a valid solution, go
  back to step $1$.
\end{enumerate}

At the end of protocol, we have a vector $\ev$ such that
$\Hmpi \transpose{\ev} = \transpose{\sv}$ and $|\ev| = w$. Let
$\ev_{\pi^{-1}}$ be the vector $\ev$ where we permute all the
coordinates according to $\pi^{-1}$. Hence,
\begin{equation}
\Hm \transpose{\ev}_{\pi^{-1}} = \Hmpi \transpose{\ev} = \transpose{\sv} 
\quad \textrm{ and } \quad 
|\ev_{\pi^{-1}}| = |\ev| =  w.
\label{eq:permutation}
\end{equation}
Therefore, $\ev_{\pi^{-1}}$ is a solution to the problem.

\subsection{Analysis of the Algorithm} 

In order to analyse this algorithm, we rely on the following two
propositions.

\begin{notation}
  An important quantity to understand the complexity of this algorithm
  is the probability of success at step 4. On an input $(\Hm,\sv)$
  uniformly drawn at random, suppose that we have a solution to the
  Subset Sum problem, \textit{i.e.} a vector $\ev''$ such that
  $\Hm''\transpose{\ev''} = \transpose{\sv''}$ and $|\ev''| = p$. Let
  $\transpose{\ev'} = \transpose{\sv'} - \Hm'\transpose{\ev''}$.
  We will denote:
  \[\Ppl \eqdef \mathbb{P}\left(|\ev'| = w - p \mid 
    |\ev''| = p\right).\]
\end{notation}

\begin{proposition} \label{propo:wkfac}
  We have, up to a polynomial factor,
  \[
  \Ppl = \frac{\binom{n-k-\ell}{w-p}(q-1)^{w-p}} {\min
  \left( q^{n-k-\ell},\binom{n}{w}(q-1)^{w}q^{-\ell} \right)}.
  \]
\end{proposition}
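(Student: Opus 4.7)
The plan is to interpret $\Ppl$, up to polynomial factors, as a ratio of two expected counts:
\[
\Ppl \;\asymp\; \frac{\esp[M]}{\esp[N]},
\]
where $M$ is the number of full solutions $\ev$ of weight $w$ with $\ev\transpose{\Hm_\pi} = \sv$ whose last $k+\ell$ coordinates have weight exactly $p$, and $N$ is the number of subset-sum solutions, i.e.\ vectors $\ev'' \in \Fq^{k+\ell}$ of weight $p$ with $\Hm''\transpose{\ev''} = \transpose{\sv''}$. This identification is the standard second-moment heuristic; we comment on the case where $N$ fails to concentrate in the last paragraph.

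Each expectation splits into a \emph{planted} and a \emph{random} contribution. Let $\ev_0 = (\av, \bv)$, with $\av \in \Fq^{n-k-\ell}$ and $\bv \in \Fq^{k+\ell}$, denote a uniform weight-$w$ preimage of $\sv$ under $\Hm_\pi$ (whose existence is ensured by the sampling of $\sv$). Then $\bv$ is automatically a subset-sum solution and contributes to $N$ exactly when $|\bv| = p$, while every other weight-$p$ vector $\ev'' \neq \bv$ satisfies $\Hm''\transpose{\ev''} = \transpose{\sv''}$ with probability $q^{-\ell}$ over the independent randomness of $\Hm''$. The analogous reasoning for full solutions gives
\[
\esp[N] \;\asymp\; T_1 + T_2, \qquad \esp[M] \;\asymp\; B + T_2,
\]
with
\[
T_1 := \frac{\binom{k+\ell}{p}(q-1)^p}{q^\ell},\;\; T_2 := \frac{\binom{k+\ell}{p}\binom{n-k-\ell}{w-p}}{\binom{n}{w}},\;\; B := \frac{\binom{n-k-\ell}{w-p}\binom{k+\ell}{p}(q-1)^w}{q^{n-k}}.
\]
Here $T_2 = \prob(|\bv|=p)$ and $B = T_2 \cdot N_{\mathrm{full}}$, with $N_{\mathrm{full}} := \binom{n}{w}(q-1)^w/q^{n-k}$ the expected number of full solutions from Proposition~\ref{propo:GVdist}.

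Using $A+B \asymp \max(A,B)$, the ratio is $\Ppl \asymp \max(B,T_2)/\max(T_1,T_2)$. Setting $P_1 := \binom{n-k-\ell}{w-p}(q-1)^{w-p}/q^{n-k-\ell}$ and using the two identities $B/T_2 = N_{\mathrm{full}}$ and $T_1/T_2 = N_{\mathrm{full}}/P_1$, a short case split finishes the argument: when $N_{\mathrm{full}} \geq 1$, both maxima are realised by the ``random'' terms, giving $\Ppl \asymp B/T_1 = P_1$, i.e.\ the first argument of the $\min$ in the denominator; when $N_{\mathrm{full}} < 1$ (and in the nontrivial regime $\Ppl \leq 1$), the numerator is dominated by $T_2$ while the denominator is still dominated by $T_1$, giving $\Ppl \asymp T_2/T_1 = P_1/N_{\mathrm{full}}$, i.e.\ the second argument.

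The main obstacle I anticipate is justifying the step $\Ppl \asymp \esp[M]/\esp[N]$ in the small-$\esp[N]$ regime, where $N$ may fail to concentrate around its mean. This is cleanly bypassed by a direct per-$\ev''$ computation: if a uniformly chosen subset-sum solution $\ev''$ equals $\bv$ then $\ev' = \av$ deterministically has weight $w-p$; otherwise, $\ev' = \sv' - \ev''\transpose{\Hm'}$ is uniform in $\Fq^{n-k-\ell}$ (because $\Hm'$ is independent of $\Hm''$ and $\ev'' - \bv \neq 0$), so $|\ev'| = w-p$ with probability exactly $P_1$. Averaging these two outcomes weighted by the probability that the chosen subset-sum solution equals $\bv$ reproduces the same two regimes and hence the claimed formula.
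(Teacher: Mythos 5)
Your proposal is correct and lands exactly on the stated formula, but it is substantially more developed than the paper's own argument, which is a two-sentence sketch: the numerator counts the weight-$(w-p)$ vectors $\ev'$, the denominator is the size $q^{n-k-\ell}$ of the ambient space for ``typical random behavior,'' and the second term of the $\min$ is justified only by the remark that the planted instance caps the effective sample space by the number of subset-sum solutions. What you add is the explicit planted-versus-random decomposition of the first moments ($\esp[N]\asymp T_1+T_2$, $\esp[M]\asymp B+T_2$), the identities $B/T_2=N_{\mathrm{full}}$ and $T_1/T_2=N_{\mathrm{full}}/P_1$ that make the case split transparent, and --- most valuably --- the final per-$\ev''$ conditional computation, which matches the paper's actual definition of $\Ppl$ as a conditional probability rather than a ratio of expected counts, and which sidesteps the concentration issue for $N$ in the low-density regime. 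This makes visible \emph{why} the two branches of the $\min$ correspond to $N_{\mathrm{full}}\geq 1$ and $N_{\mathrm{full}}<1$, something the paper leaves implicit. Two small caveats: you correctly note that when $N_{\mathrm{full}}<P_1$ the formula exceeds $1$ and must be read as capped (the paper has the same issue and does not mention it); and the uniformity of $\ev'=\av+(\bv-\ev'')\transpose{\Hm'}$ relies on treating $\Hm'$ and $\Hm''$ as independent uniform matrices, which is only heuristically true after the partial Gaussian elimination --- but this is the same modelling assumption the paper makes throughout, so it is not a gap relative to the intended level of rigor.
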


\begin{proof} 
  The proof of this statement is simple combinatorics. The numerator
  corresponds to the number of vectors $\ev'$ of weight $w-p$. The
  denominator corresponds to the inverse of the probability that
  $\transpose{\ev'} = \transpose{\sv'} - \Hm'\transpose{\ev''}$. For a
  typical random behavior, this is equal to $q^{n-k-\ell}$. But here
  we know that there is at least one solution. Therefore, we know that
  the number of vectors of weight $w-p$ is bounded from above by the
  number of vectors $\ev$ such that
  $\Hm''\transpose{\ev''} = \transpose{\sv''}$. This explains the
  second term of the minimum. \qed
\end{proof}

\begin{proposition} 
  Assume that we have an algorithm that finds a set $\cS$ of solutions
  of the Subset Sum problem in time $T$. The average running time of the
  algorithm is, up to a polynomial factor,
  \[
  T \cdot \max\left(1,\frac{1}{|\cS| \cdot \Ppl}\right).
  \]
\end{proposition}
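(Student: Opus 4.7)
I would write the expected running time as a product of (i) the cost of a single iteration of the outer loop (steps 1--4) and (ii) the expected number of iterations needed before a valid $\ev$ is produced. These two factors will match, respectively, the two pieces of the claimed bound.

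For the cost of one iteration: step 1 draws a random permutation in $\text{poly}(n)$; step 2 performs a Gaussian elimination of cost $\text{poly}(n)$ and fails with constant probability (so at most a constant number of retries within one iteration is needed); step 3 costs $T$ by assumption; step 4 examines each candidate $\ev'' \in \cS$ by computing $\transpose{\ev'} = \transpose{\sv'} - \Hm'\transpose{\ev''}$ and checking its Hamming weight, at cost $\text{poly}(n)$ per candidate. Since step 3 has produced $\cS$ within time $T$, we have $|\cS| \le T$, hence step 4 totals $O(T)$ up to polynomial factors. One outer iteration therefore costs $O(T)$ up to polynomial factors.

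For the expected number of iterations, let $p_{\text{succ}}$ be the probability that a single iteration outputs a valid $\ev$. By the definition of $\Ppl$ and linearity of expectation, the expected number of valid candidates produced within $\cS$ in one iteration is $|\cS| \cdot \Ppl$. The union bound gives $p_{\text{succ}} \le \min(1, |\cS| \cdot \Ppl)$; in the other direction one expects $p_{\text{succ}} = \Theta(\min(1, |\cS| \cdot \Ppl))$, for instance through the heuristic $1 - (1-\Ppl)^{|\cS|}$. Since a fresh permutation $\pi$ is drawn at each outer iteration, the iterations can be treated as independent Bernoulli trials and the expected number of them until the first success is $1/p_{\text{succ}} = O(\max(1, 1/(|\cS| \cdot \Ppl)))$. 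Multiplying by the per-iteration cost yields the stated bound.

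The main obstacle is the lower bound $p_{\text{succ}} = \Omega(\min(1, |\cS| \cdot \Ppl))$, since the events ``$|\ev'| = w-p$'' for distinct $\ev'' \in \cS$ share the common randomness $(\Hm', \sv')$. The cleanest way I would handle this is a second-moment computation showing that pairwise correlations are negligible for typical $\Hm, \sv, \pi$, which gives the desired lower bound up to a constant factor. Because the claim only asks for equality up to a polynomial factor, any such constant is absorbed into the hidden factor and the conclusion follows.
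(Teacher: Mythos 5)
The paper states this proposition without any proof --- it is treated as the standard folklore analysis of ISD-type loops --- so there is no authorial argument to compare yours against. Your outline is exactly the reasoning the authors implicitly rely on: per-iteration cost $O(T)$ (using $|\cS|\le T$ so the test step is absorbed), per-iteration success probability $\Theta\bigl(\min(1,|\cS|\cdot\Ppl)\bigr)$, and a geometric number of independent iterations. You also correctly isolate the only genuinely delicate point, namely the lower bound on the success probability when the events $\{|\ev'|=w-p\}$ for different $\ev''\in\cS$ share the randomness of $(\Hm',\sv')$; a second-moment argument is the right instinct there, though in this literature the step is usually taken as a heuristic independence assumption rather than proved. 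One further caveat worth flagging: successive outer iterations reuse the same $(\Hm,\sv)$ and only refresh $\pi$, and $\Ppl$ is defined as a probability over the input distribution, so treating the iterations as i.i.d.\ Bernoulli trials is itself part of the same standard heuristic rather than a theorem. With those caveats made explicit, your argument is correct and is the intended one.
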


As we can see, all the parameters are entwined. The success probability
$\Ppl$ depends of $p$ and $\ell$, as well as the time $T$ to find the set 
$\cS$ of solutions.

In this work, we will focus on a family of parameters useful in the
analysis of the {\wave} signature scheme \cite{DST18}. More precisely,
we will study the following regime:
\[ q=3 \quad ; \quad R \in [0.5,0.9]  \quad ; \quad W \in [0.9,0.99].\]

One consequence of working with a very high relative weight $W$ is
that our best algorithms will work with:
\begin{equation} \label{eq:regime}
\ell = \Theta(n) \quad ; \quad p = k+\ell.
\end{equation}

Here, $\ell$ is $\Theta(n)$ for the following reason: if $\ell = o(n)$
then it is readily verified that, asymptotically in $n$, the average
running time of the \PGESS framework will be bounded from below (up to
a polynomial factor) by $1/\mathcal{P}_{p,0}$. This exactly
corresponds to the complexity of the simplest generic algorithm to
solve $\SD$, namely Prange's ISD algorithm \cite{P62}.

\subsection{Reduction to the Subset Sum Problem} 
\label{subsec:redSubsetSum}

In step $3$ of the \PGESS framework, we have a matrix
$\Hm'' \in \F_q^{\ell \times (k+\ell)}$, a vector
$\sv'' \in \F_q^{\ell}$ and we want to compute a set
$\cS \subseteq \F_q^{k + \ell}$ of solutions $\ev''$ of
$\Hm''\transpose{\ev''} = \transpose{\sv''}$ such that $|\ev''| = p$.
At first sight, this looks exactly like a Syndrome Decoding problem
with inputs $\Hm''$ and $\sv''$ so we could just recursively apply the
best $\SD$ algorithm on this subinstance. But the main difference is
that, in this case, we want to find many solutions to the problem and
not just one. One possibility to solve this problem is to reduce it to
the Subset Sum problem on vectors in $\F_q^{\ell}$.

\begin{restatable}{problem}
  {subsetsum}[Subset Sum problem - $\SS(q,n,m,L,p)$]
  \label{prob:SS}
  
  \begin{tabular}{ll}
    Instance: \quad & 
                      $n$ vectors $\xv_{i} \in \F_q^m$ 
                      for $1 \leq i \leq n$, 
                      a target vector $\sv \in \F_q^m$. \\
    Output: & $L$ solutions $\bv^{(j)} = (b^{(j)}_1,\dots,b^{(j)}_n) 
              \in \{0,1\}^{n}$ for $1 \leq j \leq L$, \\
                    & such that for all $j$, 
                      $\sum_{i=1}^n b^{(j)}_i \xv_{i} = 
                      \sv$ and $\wt{\bv^{(j)}} = p$.
  \end{tabular}
\end{restatable}

We can consider the same problem with elements $b$ in $\F_q$ instead
of $\zo$.

\begin{restatable}{problem}
  {subsetsumNZC}
  [Subset Sum with non-zero characteristic -
 $\SSNZC(q,n,m,L,p)$]
  \label{prob:SSNZC}

  \begin{tabular}{ll} 
    Instance: \quad & $n$ vectors $\xv_{i} \in \F_q^m$ 
                      for $1 \leq i \leq n$, 
                      a target vector $\sv \in \F_q^m$. \\
    Output: & $L$ solutions $\bv^{(j)} = (b^{(j)}_1,\dots,b^{(j)}_n) 
              \in \F_q^{n}$ for $1 \leq j \leq L$, \\
                    & such that for all $j$, 
                      $\sum_{i=1}^n b^{(j)}_i \xv_{i} = \sv$ 
                      and $\wt{\bv^{(j)}} = p$.
  \end{tabular}
\end{restatable}

\begin{notation}
  We will denote $\SS(q,n,m,L,\emptyset)$ (\textit{resp.}
  $\SSNZC$) the $\SS$ problem (\textit{resp.}
  $\SSNZC$ problem) without any constraint on the weight.
\end{notation}

Again, we will be interested in the average case, where all the inputs
are taken uniformly at random. Notice that the problem that needs to
be solved at step $3$ of the PGE+SS framework reduces exactly to
$\SSNZC(q,k+\ell,\ell,|\cS|,p)$.

There is an extensive literature \cite{HJ10,BCJ11} about the Subset
Sum problem for specific parameter ranges, typically when
$L = 1, q = 2, n = m$ and $p = \frac{m}{2}$. This is the hardest case
where there is on average a single solution. There are several regimes
of parameters, each of which lead to different algorithms. For
instance, when $m = O(n^{\varepsilon})$ for $\varepsilon < 1$, there
are many solutions on average and we are in the high density setting
for which we have sub-exponential algorithm \cite{L05}.  Table
\ref{table:SS} summarizes the complexity of algorithms to solve the
Subset Sum problem for some different regimes of parameters when only
one solution is required ($L=1$) and for $q = 2$.

\begin{table}[h!] 
  \centering
  \begin{tabular}{ c c c }
    \toprule
    \quad Value of $m$ \quad & \quad Complexity \quad & \quad 
                                                        Reference \quad \\
    \midrule
    $O(\log(n))$ & $\text{poly}(n)$ & \cite{GM91,CFG89} \\
    $O(\log(n)^{2})$ & $\text{poly}(n)$ & \cite{FP05}  \\
    $O(n^{\varepsilon})$ 
    for $\varepsilon < 1$ & 
                            $ 2^{O\left(\frac{n^{\varepsilon}}{\log(n)}\right)}$ 
                              & \cite{L05} \\
    $n$ & $2^{O(n)}$ & \cite{HJ10,BCJ11} \\
    \bottomrule 
  \end{tabular}
  
  \vspace*{+0.3cm}
  \caption{Complexity of best known algorithms to solve 
$\SS(2,n,m,1,\emptyset)$. \label{table:SS}}
\end{table}

In our case, $m$ will be a small, but constant, fraction of $n$, which
leads to multiple solutions but exponentially complex algorithms to
find them. We will be in a moderate density situation. Furthermore,
the case $L = 1$ and $L \gg 1$ require quite different
algorithms. When $q = 2$, authors of \cite{BJMM12} show how to
optimize this whole approach to solve the original Syndrome Decoding
problem using better algorithms for the Subset Sum problem.

\subsection{Application to the \PGESS Framework with High Weight}

There are quite a lot of interesting regimes that could be studied
with this approach and have not been studied yet. Indeed, very few
papers tackle the case $q \geq 3$ and they only cover a small fraction
of the possible parameters.  In this work we focus on the problem
$\SSNZC(3,k+\ell,\ell,|\cS|,k + \ell)$ given by the \PGESS framework
for high weights in $\F_3$. The choice of $p = k + \ell$ for large weights is
explained in Equation \eqref{eq:regime}. This is quite convenient
because this problem is actually equivalent to solving
$\SS(3,k+\ell,\ell,|\cS|,\emptyset)$ as shown by the following lemma.

\begin{lemma}
  \label{lem:reduction}
  If we have an algorithm that solves
  $\SS(3,k+\ell,\ell,|\cS|,\emptyset)$ then we have an algorithm that solves
  $\SSNZC(3,k+\ell,\ell,|\cS|,k+\ell)$ with the same complexity.
\end{lemma}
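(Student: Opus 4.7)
The main observation I would use is that a vector $\bv \in \F_3^{k+\ell}$ has Hamming weight exactly $k+\ell$ if and only if every coordinate is nonzero, i.e., $\bv \in \{1,2\}^{k+\ell}$. Since shifting by $1 \in \F_3$ sends $\{0,1\}$ to $\{1,2\}$, this gives a natural bijection
\[
\varphi \colon \{0,1\}^{k+\ell} \longrightarrow \{1,2\}^{k+\ell}, \qquad \cv \longmapsto \un + \cv,
\]
where the addition is in $\F_3$. Every candidate solution of $\SSNZC(3,k+\ell,\ell,|\cS|,k+\ell)$ can therefore be written as $\bv = \un + \cv$ for a unique $\cv \in \{0,1\}^{k+\ell}$, with no constraint whatsoever on the weight of $\cv$. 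This is the key algebraic fact that makes the two problems interchangeable in this particular regime.

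With this in mind, the plan is to reduce as follows. Given an instance $(\xv_1,\dots,\xv_{k+\ell};\sv)$ of $\SSNZC(3,k+\ell,\ell,|\cS|,k+\ell)$, I first compute the shifted target $\sv' := \sv - \sum_{i=1}^{k+\ell}\xv_i$ in polynomial time, and then feed the instance $(\xv_1,\dots,\xv_{k+\ell};\sv')$ to the assumed oracle for $\SS(3,k+\ell,\ell,|\cS|,\emptyset)$. Any $\cv \in \{0,1\}^{k+\ell}$ returned by the oracle, for which $\sum_i c_i \xv_i = \sv'$, satisfies
\[
\sum_{i=1}^{k+\ell}(1+c_i)\,\xv_i \;=\; \sum_{i=1}^{k+\ell}\xv_i + \sv' \;=\; \sv,
\]
so $\bv := \un + \cv$ is a valid $\SSNZC$ solution of weight exactly $k+\ell$. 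Conversely, $\bv \mapsto \bv - \un$ shows that every $\SSNZC$ solution arises in this way, so the solution sets are in bijection and the $|\cS|$ distinct outputs of the oracle yield $|\cS|$ distinct valid $\bv^{(j)}$'s as required.

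The preprocessing (one sum of $k+\ell$ vectors) and the postprocessing ($|\cS|$ componentwise additions) together cost only a polynomial factor, so the reduction preserves the running time up to the polynomial factors absorbed by the asymptotic notation of the paper. I also note that it preserves the average-case distribution used throughout: if $\sv$ is uniform on $\F_3^{\ell}$, so is the shifted target $\sv'$, so invoking the $\SS$ oracle under its standard average-case analysis is justified. There is no real obstacle here; the lemma is essentially the algebraic observation above, made possible by the specific choice $p = k+\ell$ forced by the large-weight regime of Equation~\eqref{eq:regime}.
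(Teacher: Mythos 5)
Your proof is correct and follows essentially the same route as the paper's: both reduce $\SSNZC(3,k+\ell,\ell,|\cS|,k+\ell)$ to the unconstrained binary problem via an affine change of variables on the coefficients, with the target shifted accordingly. In fact your constants are the right ones: writing $\bv = \un + \cv$ with shifted target $\sv' = \sv - \sum_i \xv_i$ sends $\{0,1\}$ bijectively onto $\{1,2\}$, whereas the paper's printed choice $b_i = (b'_i-1)/2$ with $\sv' = 2\sv + \sum_i \xv_i$ sends $b'_i = 1$ to $b_i = 0$ in $\F_3$ (since $2^{-1}=2$), so its output does not actually lie in $\{1,2\}^{k+\ell}$ even though the identity $\sum_i b_i \xv_i = \sv$ still checks out --- a slip that your version repairs. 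Your extra observations (the bijection between the two solution sets guaranteeing $|\cS|$ distinct outputs, and the fact that a uniform $\sv$ yields a uniform $\sv'$ so the average-case guarantee of the $\SS$ oracle applies) are not made explicit in the paper but are exactly the points needed to make the reduction airtight.
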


\begin{proof}
  Let $\mathcal{A}$ be an algorithm that solves
  $\SS(3,k+\ell,\ell,|\cS|,\emptyset)$ and consider an instance
  $(\xv_{1},\dots,\xv_{k+\ell})$, $\sv$ of
  $\SSNZC(3,k+\ell,\ell,|\cS|,k+\ell)$.  We want to find
  $b_1,\dots,b_{k+\ell} \in \{1,2\}$ (see $\F_3 = \{0,1,2\}$) such that
  $\sum_{i=1}^{k+\ell} b_i \xv_{i} = s$. Let
  $\sv' = 2\sv + \sum_i \xv_{i}$ and let us run $\mathcal{A}$ on input
  $(\xv_{1},\dots,\xv_{k+\ell}), \sv'$. We obtain
  $b'_1,\dots,b'_{k+\ell} \in \zo$ such that
  $\sum_{i=1}^{k+\ell} b'_i \xv_{i} = \sv'$. Take
  $b_i = \frac{b'_i-1}{2}$ for $1 \leq i \leq k+\ell$, where the
  division is done in $\F_3$ and return $(b_1,\dots,b_{k+\ell})$.

  Indeed, this gives a valid solution to the problem: the elements
  $b_i$ belong to $\{1,2\}$ and we have:
  \[
  \sum_{i=1}^{k+\ell} b_i \xv_{i} = \sum_{i=1}^{k+\ell} \frac{
    b'_i-1}{2}\xv_{i} = \frac{\sv'}{2} - \frac{\sum_{i=1}^{k+\ell}
    \xv_{i}}{2} = \sv.  
  \]   

\vspace{-10mm}
\qed

\vspace{5mm}
\end{proof}

Hence, in the context of the \PGESS framework for solving $\SD$ with
high weights, it is enough to solve
$\SS(3,k+\ell,\ell,|\cS|,\emptyset)$. However, as explained at the end
of Subsection \ref{subsec:PGESS}, we will have to choose
$\ell = \Theta(n) = \Theta(k)$ (because $k = \lceil R n \rceil$).
Therefore, we are in a regime where solving the Subset Sum problem
requires exponential complexity, as explained in the previous
subsection. However, as we will see in the next session, we will be
able to choose $\ell$ as a small fraction of $k$. In this case,
generic algorithms as Wagner's \cite{W02} perform exponentially better
compared to Prange's algorithm \cite{P62} (case $\ell = 0$) or Subset
Sum algorithms \cite{BCJ11} (case $\ell = n-k$).

\section{Ternary Subset Sum with the Generalized Birthday Algorithm}
\label{sec:GBA}
We show in this section how to solve $\SS(3,k+\ell,\ell,L,\emptyset)$,
first with Wagner's algorithm \cite{W02}. Parameters $k$ and $\ell$
will be free. We will focus on the values $L$ for which we can find
$L$ solutions to $\SS\left(3,k+\ell,\ell,L,\emptyset\right)$ in time
$O(L)$.  In such a case, we say that we can find solutions in
\emph{amortized time $O(1)$}.

\subsection{A Brief Description of Wagner's Algorithm}

Recall that we are here in the context of the Subset Sum step of the
PGE+SS framework described in Subsection \ref{subsec:PGESS}. Given
$k+\ell$ vectors $\xv_{1},\cdots,\xv_{k+\ell} \in \F_3^{\ell}$
(columns of the matrix $\Hm''$) and a target vector
$\sv \in \F_3^{\ell}$, our goal is to find $L$ solutions of the form
$\bv^{(j)} = (b^{(j)}_1,\cdots,b^{(j)}_{k+\ell}) \in \{0,1\}^{k+\ell}$
such that for all $1 \leq j \leq L$,
\begin{equation} 
\label{eq:wagEq} 
\sum_{i=1}^{k+\ell} b^{(j)}_i \xv_{i} = \sv.
\end{equation} 

Here, we are interested in the average case, which means that all the
vectors $\xv_{i}$ are independent and follow a uniform law over
$\F_3^{\ell}$. In order to apply Wagner's algorithm \cite{W02}, let
$a \in \mathbb{N}^{*}$ be some integer parameter. For
$i \in \llbracket 1, 2^{a} \rrbracket$, denote by $\Ic_i$ the sets
$\Ic_i \eqdef\ \llbracket 1 +
\frac{(i-1)({k+l})}{2^a},\frac{i({k+l})}{2^a} \rrbracket$.
The sets $\Ic_i$ form a partition of $\llbracket1,{k+\ell}\rrbracket$.

The first step of Wagner's algorithm is to compute $2^{a}$ lists
$(\Lc_i)_{1 \leq i \leq 2^a}$ of size $L$ such that:
\begin{equation}
\label{eq:listSS}  
\forall i \in \llbracket 1, 2^{a} \rrbracket, \mbox{ } 
\Lc_{i} \subseteq \left\{ \sum_{j \in \Ic_i} b_j\xv_{j} \mbox{ } : \mbox{ } 
\forall j \in \Ic_i, \; b_j \in \zo \right\} \textrm{ and } | \Lc_i | = L.
\end{equation} 

Each list $\Lc_i$ consists of $L$ random elements of the form
$\sum_{j \in \Ic_i} b_j\xv_{j}$ where the randomness is on
$b_j \in \{0,1\}$. By construction, we make sure that given
$\yv \in \Lc_i$ we have access to the coefficients
$(b_j)_{j \in \Ic_i}$ such that $\yv = \sum_{j \in \Ic_i}b_j \xv_{j}$.
In other words, we have divided the vectors
$\xv_1,\ldots,\xv_{k+\ell}$ in $2^{a}$ stacks of $(k+\ell)/2^{a}$
vectors and for each stack we have computed a list of $L$ random
linear combinations of the vectors in the stack. The running time to build
theses lists is $O(L)$. 
Once we have computed these lists we can use the main idea of Wagner
to solve \eqref{eq:wagEq}. In our case we would like to find solutions
in amortized time $O(1)$. For this, Wagner's algorithm requires the lists
$\Lc_i$ to be all of the same size:
\[
\forall i \in \llbracket 1, 2^{a} \rrbracket, \ 
|\Lc_i| = L = 3^{\ell/a}.
\]
This gives a first constraint on the parameters $k,\ell$ and $a$, namely:
\[
3^{\ell/a} \leq 2^{(k+\ell)/2^{a}} \quad (\mbox{number of vectors }
\bv^{(j)} \mbox{ in each stack}).
\]
which  puts a constraint on $a$ since $k,\ell$ are fixed. 
With these lists at hand, Wagner's idea is to merge the lists in the following way.
For every $p \in \{1,3,\cdots,2^{a}-3\}$, create a list $\Lc_{p,p+1}$ from 
 $\Lc_{p}$ and $\Lc_{p+1}$ such that:
\[
\Lc_{p,p+1} \eqdef \left\{ \yv_p + \yv_{p+1} \; : \; \yv_i \in
  \Lc_i \textrm{ and the last $\ell/a$ bits of $\yv_p+\yv_{p+1}$ are } 0\mbox{s.} \right\}.
\]
A list $\Lc_{2^a-1,2^a}$ is created from $\Lc_{2^{a}-1}$ and
$\Lc_{2^{a}}$ in the same way except that the last $\ell/a$ bits have
to be equal to those of $\sv$. As the elements of the lists $\Lc_p$
are drawn uniformly at random in $\F_3^{\ell}$, it is easily verified
that by merging them on $\ell/a$ bits, the new lists $\Lc_{p,p+1}$ are
typically of size ${|\Lc_i|^{2}}/3^{\ell/a} = {(3^{\ell/a})^{2}}/{3^{\ell/a}} = 3^{\ell/a}$.
Therefore, the cost in time and in space of such a merging (by using
classical techniques such as hash tables or sorted lists) will be
$O(3^{\ell/a})$ on average. This way, we obtain $2^{a-1}$ lists of
size $L$. It is readily seen that we can repeat this process $a-1$
times, with each time a cost of $O(3^{\ell/a})$ for merging on
$\ell/a$ new bits. After $a$ steps, we obtain a list of solutions to the
Equation \eqref{eq:wagEq} containing $L = 3^{\ell/a}$ elements on
average.

	\begin{figure}[h!]
		\centering
		\begin{tikzpicture}
		\draw [<-] (0.5,0.5) -- (2,0.5);
		\node at (3.2,0.5) {Set of solutions};
		\draw (-0.35,0) rectangle (0.35,1);
		\draw [->] (-1.25,-1.5) -- (-0.05,-0.3);
		\draw [->] (1.25,-1.5) -- (0.05,-0.3);
		\draw [<-] (1.25,-0.9) -- (2.25,-0.9);
		\node at (4.8,-0.9) {Merging on $\ell/2$ bits according to $\sv$};
		\draw (0.9,-2.8) rectangle (1.6,-1.8) ;
		\node at (1.25,-2.55) {$\sv_{\ell/2}$};
		\draw [<->] (1.8,-2.8) -- (1.8,-2.3);
		\node at (2.15,-2.55) {$\ell/2$};
		\draw (0.9,-2.8) rectangle (1.6,-2.3);
		\draw (-1.6,-2.8) rectangle (-0.9,-1.8);
		\node at (-1.25,-2.55) {$\mathbf{0}$};
		\draw [<->] (-1.8,-2.8) -- (-1.8,-2.3);
		\node at (-2.15,-2.55) {$\ell/2$};
		\draw (-1.6,-2.8) rectangle (-0.9,-2.3);
		\draw (-1.4,-3.6) rectangle (-2.1,-4.6);
		\node at (-1.75,-5) {$\Lc_1$};
		\draw [->] (-2,-3.5) -- (-1.3,-2.9);
		\draw (-0.4,-3.6) rectangle (-1.1,-4.6);
		\node at (-0.75,-5) {$\Lc_2$};
		\draw [->] (-0.5,-3.5) -- (-1.2,-2.9);
		\draw (1.1,-3.6) rectangle (0.4,-4.6);
		\node at (0.75,-5) {$\Lc_3$};
		\draw [->] (0.5,-3.5) -- (1.2,-2.9);
		\draw (2.1,-3.6) rectangle (1.4,-4.6);
		\node at (1.75,-5) {$\Lc_4$};
		\draw [->] (2,-3.5) -- (1.3,-2.9);
		\node at (5.5,-3.3) {Merging on $\ell/2$ bits according to $\sv$};
		\draw [<-] (2,-3.3)--(3,-3.3);
		\draw [->] (-3,-3.3)--(-2,-3.3);
		\node at (-4.5,-3.3){Merging on $\ell/2$ bits};
		\end{tikzpicture}
		\caption{Wagner's algorithm with $a = 2$.} 
		\label{fig:GBA4} 
	\end{figure}
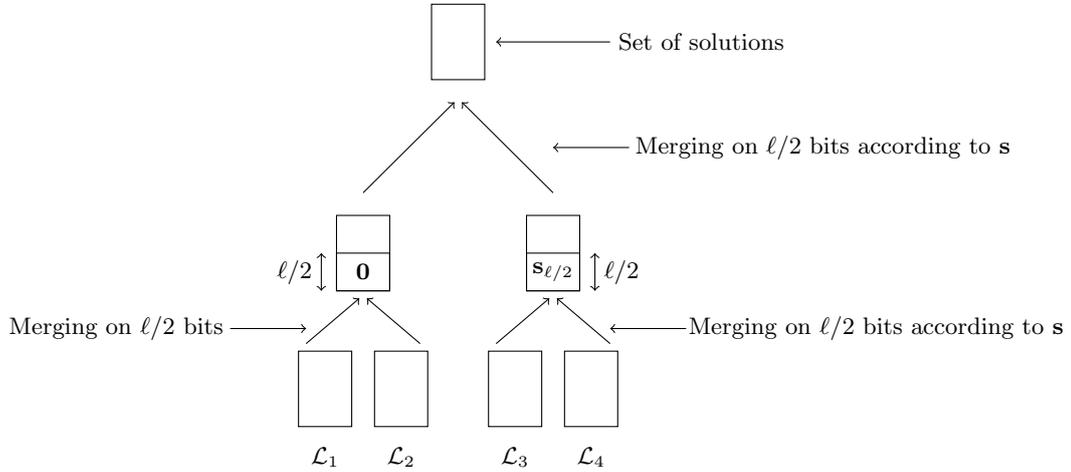 

Let us summarize the previous discussion with the following theorem.  

\begin{theorem} 
  \label{Theorem:Wagner} 
  Fix $k,\ell \in \mathbb{N}^*$ and let $a$ be any non zero integer
  such that
  \begin{equation*}
    3^{\ell/a} \leq 2^{(k+\ell)/2^{a}}.
  \end{equation*}

  The associated $\SS(3,k+\ell,\ell,3^{\ell/a},\emptyset)$ problem can be
  solved in average time and space $O(3^{\ell/a})$.
\end{theorem}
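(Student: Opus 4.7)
The plan is to formalise the Wagner tree construction already sketched just before the statement. I split the argument into four steps: (i) building the initial lists, (ii) organising the merging tree, (iii) proving correctness of the output, and (iv) bounding the expected size and cost of every intermediate list.

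\textbf{Initial lists.} First, I would partition the index set $\llbracket 1,k+\ell \rrbracket$ into $2^a$ consecutive blocks $\Ic_1,\dots,\Ic_{2^a}$ of equal size $(k+\ell)/2^a$ and, for each block, build an initial list $\Lc_i$ of size $L \eqdef 3^{\ell/a}$ by sampling $L$ distinct binary vectors $(b_j)_{j\in \Ic_i}$ and storing the pairs $\bigl(\sum_{j\in \Ic_i} b_j\xv_j,\ (b_j)_{j\in \Ic_i}\bigr)$. The constraint $3^{\ell/a}\leq 2^{(k+\ell)/2^a}$ is exactly what guarantees that each block has enough subsets to provide $L$ distinct samples, so this preprocessing runs in time and space $O(L)$.

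\textbf{Merging tree.} Next, I would arrange the $2^a$ lists as leaves of a complete binary tree of depth $a$, and partition the $\ell$ coordinates of $\F_3^\ell$ into $a$ disjoint slabs of size $\ell/a$, one slab per level. At an internal node of level $t<a$, the two child lists are merged by keeping only the pairs $(\yv_1,\yv_2)$ whose sum is $\mathbf{0}$ on slab $t$; at the root ($t=a$) the match is against the slab-$t$ part of $\sv$ instead of $\mathbf{0}$. Because the slabs cover all coordinates and the merges propagate the $(b_j)$ bookkeeping, every element of the final list is of the form $\sum_{i=1}^{k+\ell} b_i \xv_i$ with $b_i\in\{0,1\}$ and equal to $\sv$ on every coordinate, settling correctness and producing valid solutions of $\SS(3,k+\ell,\ell,L,\emptyset)$.

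\textbf{Size and cost analysis.} The core of the proof is an induction on the level $t$ showing that every list at level $t$ has expected size $L$ and that its entries remain marginally uniform on the $\ell-t\ell/a$ still-unmatched coordinates. The base case $t=0$ follows because the $\xv_j$ are uniform and independent across disjoint blocks, so the sums $\sum_{j\in\Ic_i} b_j\xv_j$ are marginally uniform over $\F_3^\ell$. For the inductive step, merging two sibling lists of expected size $L$ that are uniform on $\ell/a$ fresh coordinates yields, by linearity of expectation, $L^2/3^{\ell/a}=L$ surviving pairs whose sums remain marginally uniform on the remaining coordinates (which the merge constraint leaves untouched). Each merge is implemented in $O(L)$ expected time by hashing one child on its slab-$t$ coordinates and scanning the other; summing over the $2^a-1$ internal nodes and the leaves, and absorbing the $2^a=\mathrm{poly}(n)$ factor into the $O(\cdot)$ notation used elsewhere in the paper, gives total expected time and space $O(L)=O(3^{\ell/a})$.

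\textbf{Main obstacle.} The one subtle point is that the entries of a single list are not jointly independent — many share the same $\xv_j$ summands inherited from the leaves — so the uniformity used in the induction must be phrased at the level of \emph{marginal} distributions of individual entries. What is preserved across the tree is the independence between the two sibling lists being merged at a given node (since the two corresponding sub-trees act on disjoint families of $\xv_j$), together with marginal uniformity of each entry on the unmatched coordinates, and this is exactly the granularity at which linearity of expectation controls the expected number of surviving pairs at each node without requiring full joint independence.
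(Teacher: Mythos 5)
Your proposal is correct and follows essentially the same route as the paper, which proves the theorem by the informal discussion of Wagner's algorithm immediately preceding it: equal blocks, $2^a$ leaf lists of size $L=3^{\ell/a}$, pairwise merges on $\ell/a$ fresh coordinates per level, and the expected-size computation $L^2/3^{\ell/a}=L$. Your explicit treatment of the marginal-uniformity versus joint-independence issue (using the disjointness of the $\xv_j$ across sibling subtrees) tightens a point the paper passes over with ``it is easily verified,'' but the underlying argument is the same.
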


This theorem indicates for which value $L$ it is possible to find $L$
solutions in time $O(L)$ using Wagner's approach.

\subsection{Smoothing of Wagner's Algorithm}
\label{subsec:smoothing}

Wagner's algorithm as stated above shows how to find $L$ solutions in
amortized time $O(1)$ for $L = 3^{\ell/a}$. If we want more than $L$
solutions, we can repeat this algorithm and find all those solutions
also in amortized time $O(1)$. So the smaller $L$ is, the better the
algorithm performs. So the idea is to take the largest integer $a$
such that $3^{\ell/a} < 2^{(k+\ell)/2^{a}}$ and take $L = 3^{\ell/a}$,
as explained in Theorem \ref{Theorem:Wagner}.  But this induces a
discontinuity in the optimal value of $L$ and on the complexity: when
the input parameters change continuously, the optimal value of $a$
(which has to be an integer) evolves discontinuously, therefore the
slope of the complexity curve is discontinuous, as we can see on
Figures \ref{fig:complexity_R5} and \ref{fig:complexity_R676}. We show
here a refinement of Theorem \ref{Theorem:Wagner} that reduces the
discontinuity.

\begin{proposition} 
  Let $a$ be the largest integer such that
  $3^{\ell/(a-1)} < 2^{(k+\ell)/2^{a-1}}$. If $a \ge 3$, the above
  algorithm can find $2^{\lambda}$ solutions in time $O(2^{\lambda})$ with
  \[\lambda = {\frac{\ell \log(3)}{a-2} -
      \frac{k+\ell}{(a-2)2^{a-1}}}.\]
  \label{Prop:RealSmoothing}
\end{proposition}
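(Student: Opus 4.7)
The plan is to modify Wagner's algorithm by going one level \emph{deeper} than the maximum depth allowed by Theorem~\ref{Theorem:Wagner}, at the cost of making only the very first merge ``unbalanced''. Concretely, partition $\IInt{1}{k+\ell}$ into $2^a$ blocks $\Ic_1,\ldots,\Ic_{2^a}$ of size $(k+\ell)/2^a$ and, for each $i$, form the initial list $\Lc_i \eqdef \{\sum_{j\in\Ic_i} b_j \xv_j : b_j\in\zo\}$ containing all $N \eqdef 2^{(k+\ell)/2^a}$ possible subset sums of that block, i.e.\ the maximum possible size. By the hypothesis on $a$ we have $N \le 3^{\ell/a}$, so a balanced Wagner tree of depth $a$ in the sense of Theorem~\ref{Theorem:Wagner} does not fit; the role of the unbalanced first merge is to absorb precisely this mismatch.

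The algorithm then performs $a$ pairwise merge rounds. The first round merges each adjacent pair $\Lc_{2i-1}, \Lc_{2i}$ by constraining the last $m_1$ ternary coordinates of the sum to be $0$, with $m_1$ chosen so that the expected output size is the target $2^\lambda$: $N^2/3^{m_1} = 2^\lambda$, equivalently $m_1 \log(3) = 2\log N - \lambda = (k+\ell)/2^{a-1} - \lambda$. Each of the remaining $a-1$ rounds is a Wagner-style balanced merge on $\lambda/\log(3)$ fresh ternary coordinates, which preserves the expected list size $2^\lambda$; in the final round these coordinates are matched against the corresponding entries of $\sv$ rather than $\mathbf{0}$, so that the output consists of solutions of Equation~\eqref{eq:wagEq}.

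To determine $\lambda$, I use that the total number of merged ternary coordinates across the $a$ rounds must equal $\ell$, giving $m_1 + (a-1)\lambda/\log(3) = \ell$. Substituting the expression for $m_1$ and $2\log N = (k+\ell)/2^{a-1}$ reduces this to $(a-2)\lambda = \ell \log(3) - (k+\ell)/2^{a-1}$, which is where the hypothesis $a\ge 3$ is needed (so $a-2\ge 1$) and solves to the announced formula. The only feasibility constraint to check is $m_1 \ge 0$, i.e.\ $\lambda \le (k+\ell)/2^{a-1}$; after substitution, this is equivalent to $3^{\ell/(a-1)} \le 2^{(k+\ell)/2^{a-1}}$, precisely the hypothesis on $a-1$. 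The running-time analysis is identical to that of Theorem~\ref{Theorem:Wagner}: each of the $a=O(1)$ rounds manipulates lists of expected size at most $2^\lambda$, giving total average running time $O(2^\lambda)$ up to polynomial factors. As a sanity check, at the two boundaries $3^{\ell/(a-1)} = 2^{(k+\ell)/2^{a-1}}$ and $3^{\ell/a} = 2^{(k+\ell)/2^a}$ the formula collapses to $\lambda = \ell\log(3)/(a-1)$ and $\lambda = \ell\log(3)/a$ respectively, i.e.\ it continuously interpolates between the consecutive Wagner thresholds.

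The only genuinely delicate point I foresee is showing that the sizes of the intermediate lists concentrate sufficiently tightly around their expectations so that the uniform $O(2^\lambda)$ bound holds with high probability over the choice of the $\xv_j$; this is a standard feature of Wagner-type analyses and follows from the same concentration argument that underlies Theorem~\ref{Theorem:Wagner}, applied identically to each of the $a-1$ balanced merges and once to the unbalanced first merge.
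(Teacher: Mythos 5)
Your proposal is correct and follows essentially the same route as the paper's own proof: bottom lists of maximal size $2^{(k+\ell)/2^a}$, an unbalanced first merge on $m$ ternary coordinates chosen so that $2^{2(k+\ell)/2^a}/3^m = 2^\lambda$, then $a-1$ balanced merges on $\lambda/\log_2(3)$ coordinates, with the constraint $m + (a-1)\lambda/\log_2(3) = \ell$ yielding the stated $\lambda$. Your feasibility check $m\ge 0$ (and the boundary sanity checks) matches the paper's remark that $\lambda$ and $m$ are positive under the hypotheses.
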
 

We see that we retrieve the result of Theorem \ref{Theorem:Wagner}
when $3^{\ell/a} = 2^{(k+\ell)/2^{a}}$. We have not found any
statement of this form in the literature, which is surprising because
Wagner's algorithm has a variety of applications. We now prove the
proposition.

\begin{proof}
  Parameters $k$ and $\ell$ are fixed. Let $a$ be the largest integer
  such that $3^{\ell/(a-1)} < 2^{(k+\ell)/2^{a-1}}$ and we suppose
  that $a \ge 3$.  We will consider Wagner's algorithm on $a$ levels
  but the merging at the bottom of the tree will be performed with a
  lighter constraint: we want the sums to agree on less than $\ell/a$
  bits. Indeed, we consider the following list sizes. At the bottom of
  the trees, we take lists of size $2^{\frac{k+\ell}{2^a}}$ (the
  maximal possible size); at all other levels, we want lists of size
  $ 2^\lambda$. We run Wagner's algorithm by firstly merging on $m$
  bits. In order to obtain lists of size $2^{\lambda}$ at the second
  step, we have to choose $m$ such that
  \begin{equation}
    \label{eq:EQ6}
    \frac{\left(2^{(k+\ell)/2^a}\right)^2}{3^m} = 2^{\lambda}
    \qquad \textrm{ \textit{i.e.} \qquad}
    \frac{2(k+\ell)}{2^a} - m \log_2(3) = \lambda.
  \end{equation}
  The other $(a-1)$ merging steps are designed such that merging two
  lists of size $2^{\lambda}$ gives a new list of size $2^{\lambda}$,
  which means that we merge on $\lambda/\log_2(3)$ bits. However, in
  the final list we want to obtain solutions to the problem, which
  means that in total we have to put a constraint on all bits.
  Therefore, $\lambda$ and $m$ have to verify:
  \begin{equation} \label{eq:mlambd}
  m + (a-1)\frac{\lambda}{\log_2(3)} = \ell. 
  \end{equation} 
  By combining Equations \eqref{eq:EQ6} and \eqref{eq:mlambd} we get:
  \[
  \lambda = \frac{\ell\log_2(3)}{a-2} - \frac{k+\ell}{(a-2)2^{a-1}}\cdot
  \]
  It is easy to check that under the conditions
  $3^{\ell/(a-1)} < 2^{(k+\ell)/2^{a-1}}$ and $a \geq 3$, $\lambda$
  and $m$ are positive which concludes the proof. 
  \qed
\end{proof}


\section{Ternary Subset Sum Using Representations}
\label{sec:Reps}
\subsection{Basic Idea}
\label{subsec:representations}

In the list tree of Wagner's algorithm (see Figure \ref{fig:GBA4}), we
split each list in two, according to what is called the \emph{left-right}
procedure. This means that if we start from a set
$ S = \{\sum_{j \in \llbracket A,B \rrbracket } b_j\xv_j : |b_j| =
p\}$,
we decompose each element of $\yv \in S$ as $\yv = \yv_1 + \yv_2$
where $\yv_1 \in S_1$ and $\yv_2 \in S_2$, where
\begin{align*} 
S_1 & \eqdef \left\{\sum_{j \in \llbracket A,\lfloor\frac{B+A}{2}\rfloor \rrbracket} 
      b_j\xv_j : b_j \in \zo, \ |\bv| = p/2
      \right\} \\ 
S_2 & \eqdef \left\{\sum_{j \in \llbracket \lfloor\frac{B+A}{2}\rfloor+1,B \rrbracket} 
      b_j\xv_j : b_j \in \zo, \ |\bv| = p/2
      \right\}.
\end{align*} 
Such a decomposition does not always exist, but it exists with
probability at least $\frac{1}{p}$.  Indeed, the probability that a
vector of weight $p$ can be split this way is
\[
\frac{\left.\binom{n/2}{p/2}\right.^{2}}{\binom{n}{p}} \geq \frac{1}{p}.
\]

Wagner's algorithm uses this principle. When looking for vectors
$\mathbf{b}$ containing the same number of $0$'s and $1$'s, it looks for
$\mathbf{b}$ in the form $\mathbf{b} = \mathbf{b}_1 + \mathbf{b}_2$,
where the second half of $\mathbf{b}_1$ and the first half of
$\mathbf{b}_2$ are only zeros. The first half of $\mathbf{b}_1$ and
the second half of $\mathbf{b}_2$ are expected to have the same number
of $0$s and $1$s.

The idea of representations is to follow Wagner's approach of list
merging while allowing more possibilities to write $\bv$ as the sum
of two vectors $\bv = \bv_1 + \bv_2$. We remove the constraint that
$\bv_1$ has zeros on its right half and $\bv_2$ has zeros on its left
half.  We replace it by a less restrictive constraint: we fix the
number of $0$s, $1$s and $2$s (see $\mathbb{F}_3 = \{0,1,2\}$) in
$\bv_1$ and $\bv_2$.

\begin{figure}[h!]
\parbox{.45\linewidth}{
\centering
\begin{tabular}{cccccccc} \hline 
\multicolumn{1}{|c|}{\;1\;} &
\multicolumn{1}{c|}{\;0\;} & 
\multicolumn{1}{c|}{\;0\;} &
\multicolumn{1}{c|}{\;1\;} &
\multicolumn{1}{c|}{\cellcolor{gray!20}0} &
\multicolumn{1}{c|}{\cellcolor{gray!20}0} &
\multicolumn{1}{c|}{\cellcolor{gray!20}0} &
\multicolumn{1}{c|}{\cellcolor{gray!20}0} \\ \hline & & &
\multicolumn{2}{c}{+} & & & \\ \hline
\multicolumn{1}{|c|}{\cellcolor{gray!20}0} &
\multicolumn{1}{c|}{\cellcolor{gray!20}0} &
\multicolumn{1}{c|}{\cellcolor{gray!20}0} &
\multicolumn{1}{c|}{\cellcolor{gray!20}0} &
\multicolumn{1}{c|}{\;0\;} & 
\multicolumn{1}{c|}{\;1\;} &
\multicolumn{1}{c|}{\;0\;} & 
\multicolumn{1}{c|}{\;1\;} \\ \hline & & &
\multicolumn{2}{c}{=} & & & \\ \hline 
\multicolumn{1}{|c|}{\cellcolor{red!20}\;1\;} &
\multicolumn{1}{c|}{\cellcolor{red!20}\;0\;} & 
\multicolumn{1}{c|}{\cellcolor{red!20}\;0\;} &
\multicolumn{1}{c|}{\cellcolor{red!20}\;1\;} & 
\multicolumn{1}{c|}{\cellcolor{red!20}\;0\;} &
\multicolumn{1}{c|}{\cellcolor{red!20}\;1\;} & 
\multicolumn{1}{c|}{\cellcolor{red!20}\;0\;} &
\multicolumn{1}{c|}{\cellcolor{red!20}\;1\;} \\ \hline
\end{tabular}
}
\hfill
\parbox{.45\linewidth}{
\centering
\begin{tabular}{cccccccc} \hline 
\multicolumn{1}{|c|}{\;1\;} &
\multicolumn{1}{c|}{\;0\;} & 
\multicolumn{1}{c|}{\;2\;} &
\multicolumn{1}{c|}{\;0\;} &
\multicolumn{1}{c|}{\;0\;} &
\multicolumn{1}{c|}{\;1\;} &
\multicolumn{1}{c|}{\;1\;} &
\multicolumn{1}{c|}{\;0\;} \\ \hline & & &
\multicolumn{2}{c}{+} & & & \\ \hline
\multicolumn{1}{|c|}{\;0\;} &
\multicolumn{1}{c|}{\;0\;} &
\multicolumn{1}{c|}{\;1\;} &
\multicolumn{1}{c|}{\;1\;} &
\multicolumn{1}{c|}{\;0\;} & 
\multicolumn{1}{c|}{\;0\;} &
\multicolumn{1}{c|}{\;2\;} & 
\multicolumn{1}{c|}{\;1\;} \\ \hline & & &
\multicolumn{2}{c}{=} & & & \\ \hline 
\multicolumn{1}{|c|}{\cellcolor{red!20}\;1\;} &
\multicolumn{1}{c|}{\cellcolor{red!20}\;0\;} & 
\multicolumn{1}{c|}{\cellcolor{red!20}\;0\;} &
\multicolumn{1}{c|}{\cellcolor{red!20}\;1\;} & 
\multicolumn{1}{c|}{\cellcolor{red!20}\;0\;} &
\multicolumn{1}{c|}{\cellcolor{red!20}\;1\;} & 
\multicolumn{1}{c|}{\cellcolor{red!20}\;0\;} &
\multicolumn{1}{c|}{\cellcolor{red!20}\;1\;} \\ \hline
\end{tabular}

\vspace{8mm}

\begin{tabular}{cccccccc} \hline 
\multicolumn{1}{|c|}{\;1\;} &
\multicolumn{1}{c|}{\;0\;} & 
\multicolumn{1}{c|}{\;1\;} &
\multicolumn{1}{c|}{\;1\;} &
\multicolumn{1}{c|}{\;0\;} &
\multicolumn{1}{c|}{\;0\;} &
\multicolumn{1}{c|}{\;2\;} &
\multicolumn{1}{c|}{\;0\;} \\ \hline & & &
\multicolumn{2}{c}{+} & & & \\ \hline
\multicolumn{1}{|c|}{\;0\;} &
\multicolumn{1}{c|}{\;0\;} &
\multicolumn{1}{c|}{\;2\;} &
\multicolumn{1}{c|}{\;0\;} &
\multicolumn{1}{c|}{\;0\;} & 
\multicolumn{1}{c|}{\;1\;} &
\multicolumn{1}{c|}{\;1\;} & 
\multicolumn{1}{c|}{\;1\;} \\ \hline & & &
\multicolumn{2}{c}{=} & & & \\ \hline 
\multicolumn{1}{|c|}{\cellcolor{red!20}\;1\;} &
\multicolumn{1}{c|}{\cellcolor{red!20}\;0\;} & 
\multicolumn{1}{c|}{\cellcolor{red!20}\;0\;} &
\multicolumn{1}{c|}{\cellcolor{red!20}\;1\;} & 
\multicolumn{1}{c|}{\cellcolor{red!20}\;0\;} &
\multicolumn{1}{c|}{\cellcolor{red!20}\;1\;} & 
\multicolumn{1}{c|}{\cellcolor{red!20}\;0\;} &
\multicolumn{1}{c|}{\cellcolor{red!20}\;1\;} \\ \hline
\end{tabular}
}

\vspace{2mm}
\parbox{.45\linewidth}{\centering (1)}
\hfill
\parbox{.45\linewidth}{\centering (2)}
\caption{Same vector (1) using left-right split and (2) using representations.} 
\label{fig:representations}
\end{figure}

More precisely, we consider the set
\begin{align} \label{Eq:Reps}
S' = \left\{\sum_{j \in \llbracket A,B \rrbracket } b_j\xv_j : b_j \in
\mathbb{F}_3, \ |\{b_j=1\}| = p_1 \mbox{ and } |\{b_j=2\}| = p_2 \right\}
\end{align}
for some weights $p_1$ and $p_2$ and we want to decompose each $\yv$
into $\yv_1 + \yv_2$ such that $\yv_1,\yv_2 \in S'$. On the example of
Figure \ref{fig:representations}, we have $p = 4$, $p_1=3$ and
$p_2=1$. 

At first sight, this approach may seem unusual. Indeed, except for
very specific values of $p_1$ and $p_2$, the sum $\yv_1 + \yv_2$ will
rarely match the desired weight $p$ to be in $S$. Such a sum $\yv_1 + \yv_2$, which matches the targeted bits for merging but not the weight constraint, will be called \textit{badly-formed}. Those \textit{badly-formed} sums cannot be used for the remaining of the algorithm and must be discarded. However, the
positive aspect is that each element $\yv \in S$ accepts many
decompositions (the so-called \textit{representations})
$\yv_1 + \yv_2$ where $\yv_1,\yv_2 \in S'$. The results from
\cite{HJ10,BCJ11,BJMM12} show that this large number of ways to
represent each element can compensate the fact that most
decompositions do not belong to $S$. One can slightly lower the number
of agreement bits when merging the lists, in order to obtain on
average the desired number of elements in the merged list.

Notice that in this definition of $S'$, the elements $b_j$ belong to
the set $\F_3$ and not $\zo$, even though we want to obtain a binary
solution. The ternary structure also increases the number of
representations as shown in Figure \ref{fig:representations}. It is
actually natural to consider representations of binary strings using
three elements $\{0,1,2\}$, as in \cite{BCJ11}.

\subsection{Partial Representations}

If we relieve too many constraints and allow too many representations
of a solution, it may happen that we end up with multiple copies of
the same solution. In order to avoid this situation, we use
\emph{partial representations}, which is an intermediate approach
between \emph{left-right} splitting and using \emph{representations},
as illustrated in Figure \ref{fig:partial-rep}.

\begin{figure}[!h] \centering
	\includegraphics[width = 7cm]{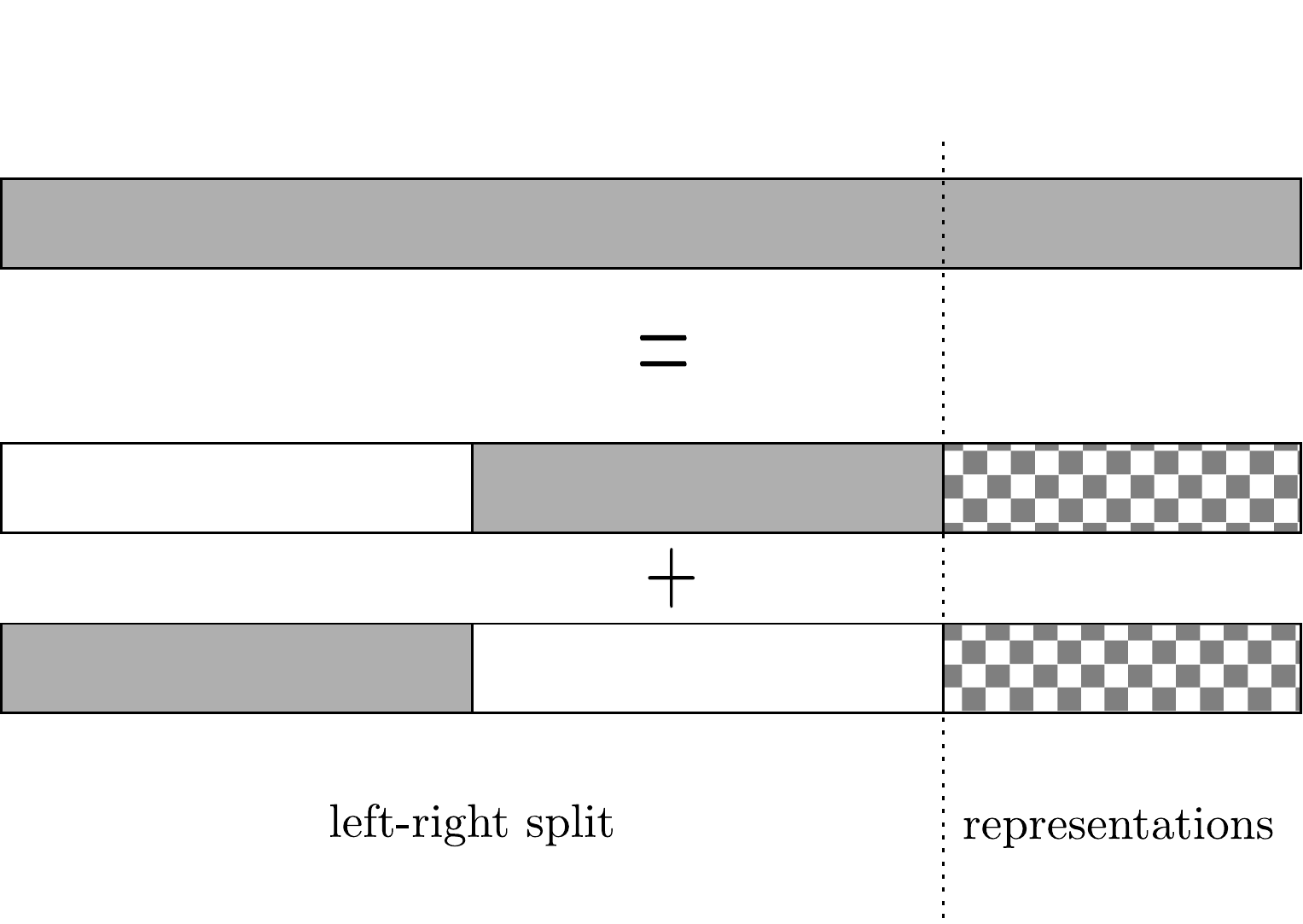}
	\caption{Decomposing a vector using partial representations.}
	\label{fig:partial-rep}
\end{figure}

\subsection{Presentation of our Algorithm} 

Plugging representations in Wagner's algorithm can be done in a
variety of ways. The way we achieved our best algorithm was mostly
done by trial and error. We present here the main features of our
algorithm.

\begin{itemize}
\item In the regime we consider, the number of floors $a$ varies from
  $5$ to $7$. Notice that this is quite larger than in other similar
  algorithms and is mostly due to the fact that we have many solutions
  to our Subset Sum problem.
\item Because we want to find many
  solutions, representations become less efficient. Indeed, the fact
  that we obtain many \textit{badly-formed} elements makes it harder to find
  solutions in amortized time $O(1)$ (or even just in small time).
\item However, we show that representations can still be useful. For
  most parameter range, the optimal algorithm consists of a left-right
  split at the bottom level of the tree, then $2$ layers of partial
  representation and from there to the top level, left-right splits
  again.
\end{itemize}


Figure \ref{fig:big_tree} illustrates an example for $a = 7$.
When we increase the number of floors, we just add some left-right
splits.

\begin{figure}[!h] \centering
	\includegraphics[width = 14cm]{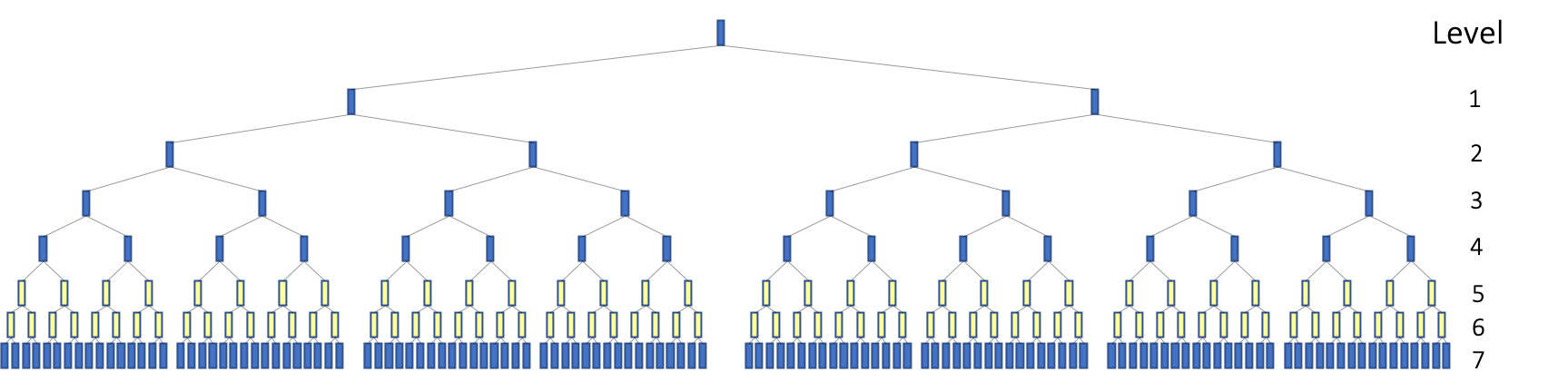}
	\caption{Wagner tree for $a = 7$. Yellow list correspond to
          representations and blue list to left-right splits.}
	\label{fig:big_tree}
\end{figure}

In the next section, we present the different parameters for a
particular input to show how our algorithm behaves.

\subsection{Application to the Syndrome Decoding Problem}
\label{subsec:partial-rep-complexity}

We embedded in the PGE+SS framework the three above-described
algorithms, namely the classical Wagner algorithm, the smoothed one in
\S\ref{sec:GBA} and the one using representation technique in
\S\ref{sec:Reps}. By using Proposition \ref{propo:wkfac}, we derived
the exponents given in Figures \ref{fig:complexity_R5} and
\ref{fig:complexity_R676}.

We present here the details of our algorithm for the SD$(3,R,W)$ with
$R = 0.676$ and $W=0.948366$. These are the parameters which are used
for the analysis of {\wave}. For this set of parameters, we claim that
the complexity of our algorithm is $2^{0.0176n}$.  In the PGE+SS
framework (see Section \ref{subsec:PGESS}), we needed to choose to
parameters $p$ and $\ell$. We take $\ell = 0.060835n$ and
$p = k + \ell$.

The best algorithm we found uses $a=7$, which means that the
associated Wagner tree has $7$ levels, and therefore $128$ leaves
(Figure \ref{fig:big_tree}). From level $0$ to level $6$, the lists
have size $L = 2^{0.0176}$ (\textit{i.e.} equal to the overall
complexity of the Subset Sum problem). As we have more than the
required number of solutions for $6$ levels, but not enough for $7$
levels, we use the smoothing method described in section
\ref{subsec:smoothing}, which gives a size of the leaves equal to
$2^{0.01039}$.

We present below in more detail how we construct the different lists of the Wagner tree.
\begin{itemize}
\item Levels 1 to 4 consist of left-right splits. For instance, at level
$4$, we have $16$ lists
\begin{equation*} \forall i \in \llbracket 1, 16 \rrbracket,
\mbox{ } \Lc_{i} \subseteq \left\{ \sum_{j \in \Ic_i} b_j\xv_{j}
\mbox{ } : \mbox{ } \forall j \in \Ic_i, \; b_j \in \zo \right\}
\textrm{ and } | \Lc_i | = L.
\end{equation*} with $\Ic_i \eqdef\ \llbracket 1 +
\frac{(i-1)({k+\ell})}{16},\frac{i(k+\ell)}{16} \rrbracket$.
\item In levels $5$ and $6$, we use partial representations.   Going from level $4$ to level $5$, on a proportion $\lambda_1 = 0.7252$ of the
vector, we use representations for level $5$ and left-right split for
level $6$. On the remaining fraction of the vector, we use
representations on both levels. More precisely, for each interval $\Ic_i$, we split it in $2$ according to Figure \ref{fig:partial_detail}. For each part, we use Equation \ref{Eq:Reps} with the following densities:
\begin{itemize}
	\item for the part with only one level of representations, $\rho_1$
	consists on $74.8\%$ of $0$s, $25.1\%$ of $1$s and $0.1\%$ of $2$s;
	\item for the part with two levels of representations, we have
	$\rho_2$, composed of $74.2\%$ of $0$s, $25.4\%$ of $1$s and $0.4\%$
	of $2$s for level $5$, and $\rho_3$ composed of $86.9\%$ of $0$s,
	$13.1\%$ of $1$s and $0.0\%$ of $2$s for level $6$.
\end{itemize} 

\item In order to construct level $7$, we start from each list of level $6$ and perform again a left-right split. 
\end{itemize}

The choice of the densities and all the calculi related to the
representations can be quite complicated. We perform a full analysis
in Appendix \ref{sec:ternaryReps}, see in particular Proposition
\ref{Prop:Reps!}.

\begin{figure}[h!] \centering
	\includegraphics[width = 13cm]{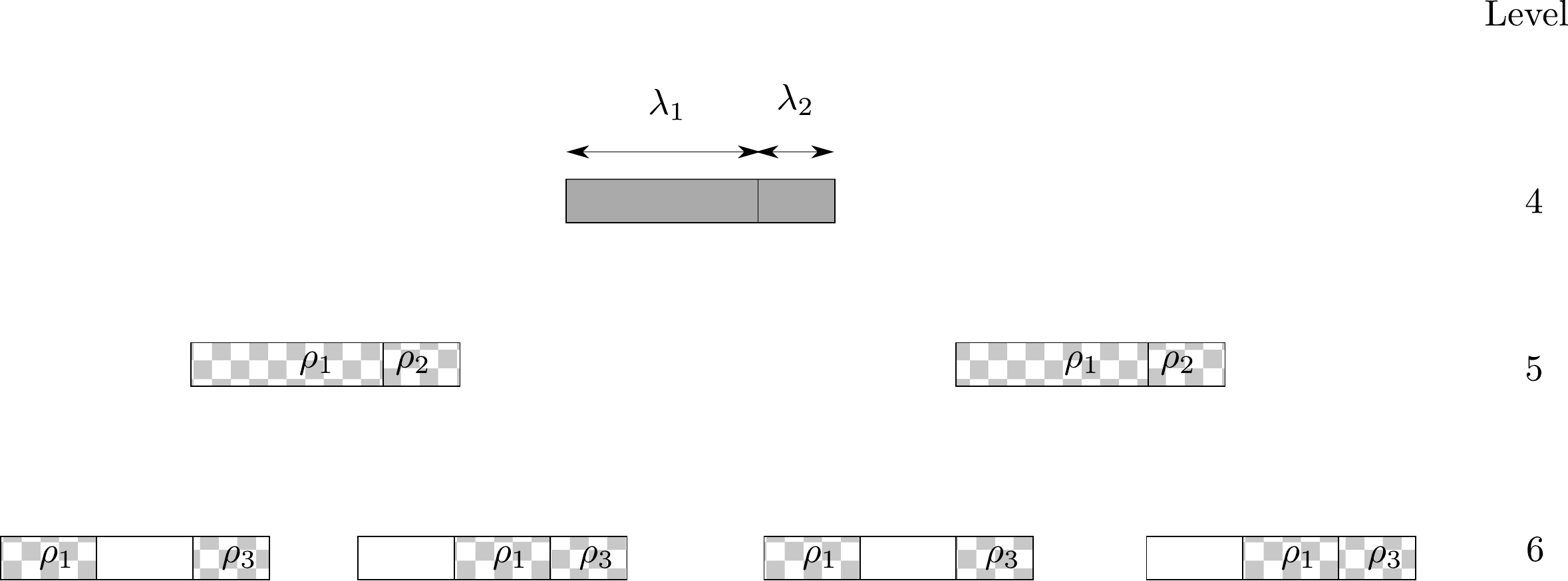}
	\caption{Detail of the floors where we use partial representations.}
	\label{fig:partial_detail}
\end{figure}

As explained in section \ref{subsec:representations}, most of the
elements we build at floors $5$ and $4$ are \textit{badly-formed} and do
not match the desired densities of $0$s, $1$s and $2$s. We only keep
the well-formed elements and lower the number of bits on which we
merge, so that the merged lists have again $L$ elements. In our case,
as the expected number of well-formed elements in level-$4$
lists is $2^{0.0116n}$, we merge on $2^{0.0055n}$ bits to compute the
level-$3$ lists (instead of $2^{0.0176n}$ bits.) Similarly, we merge
on $2^{0.0173n}$ bits to compute the level-$4$ lists because level-$5$
lists have $2^{0.0174n}$ well-formed elements. This is represented
in Figure \ref{fig:zoomed_tree}.

\begin{figure}[h!] \centering
	\includegraphics[width = 13cm]{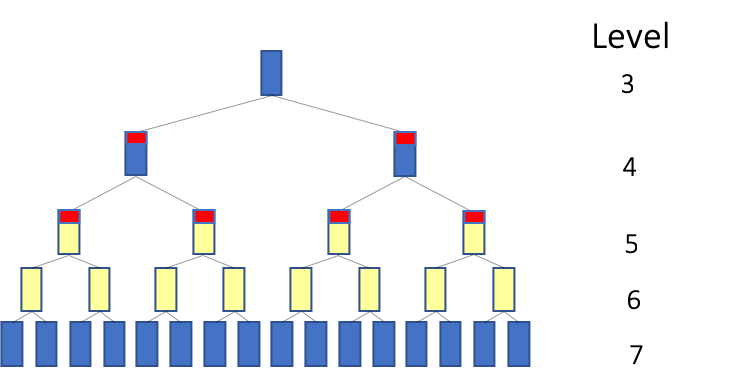}
	\caption{Detail of the bottom floors. Red elements are
          badly-formed elements.}
	\label{fig:zoomed_tree}
\end{figure}

Finally, level 7 is a left-right split with smaller lists (because of
the smoothing). The leaves have size $2^{0.01039n}$, so we merge on
$2^{0.0032n}$ bits to build the level-$6$ lists.

The numbers of well-formed elements per list are thus (from level $0$
to level $7$):
\[2^{0.0176n}, 2^{0.0176n}, 2^{0.0176n}, 2^{0.0176n}, 2^{0.0116n},
2^{0.0174n}, 2^{0.0176n}, 2^{0.01039n},\]
and the numbers of bits we merge on:
\[2^{0.0176n}, 2^{0.0176n}, 2^{0.0176n}, 2^{0.0055n}, 2^{0.0173n},
2^{0.0176n}, 2^{0.0032n}.\]

One can check that we merge on a total of $2^{0.0964n}$ bits, which is
exactly equal to $2^{\ell \log_2(3)}$, meaning that the level-$0$ list
is entirely composed of solutions of the Subset Sum problem.

One can also check that the Subset Sum problem has $2^{\ell+k-\ell
\log_2(3)} = 2^{0.6404n}$ solutions, that one solution has
$2^{0.4915n}$ representations (see appendix \ref{sec:ternaryReps}),
and that the merging constraints waste $2^{1.1143n}$ solution
representations. We are thus left with $2^{0.0176n}$ solutions, which
are exactly the solutions we get on the level-$0$ list.

\subsection{Summary of our Results}

We present here $2$ plots that illustrate the performance of our
different algorithms. What we show is that, in this parameter range,
the gain obtained by using representations is relatively small. This
is quite surprising because, in the binary case, representations are
very efficient. One explanation we have is that, in a regime where
there are naturally many solutions, Wagner's algorithm is very
efficient while the representation technique has difficulties in
finding solutions in amortized time $O(1)$. In Section \ref{sec:Hard},
we study the hardest instances, and show that representations turn out
to be more efficient.

\begin{figure}[!h!] \centering
	\includegraphics[width = 12cm]{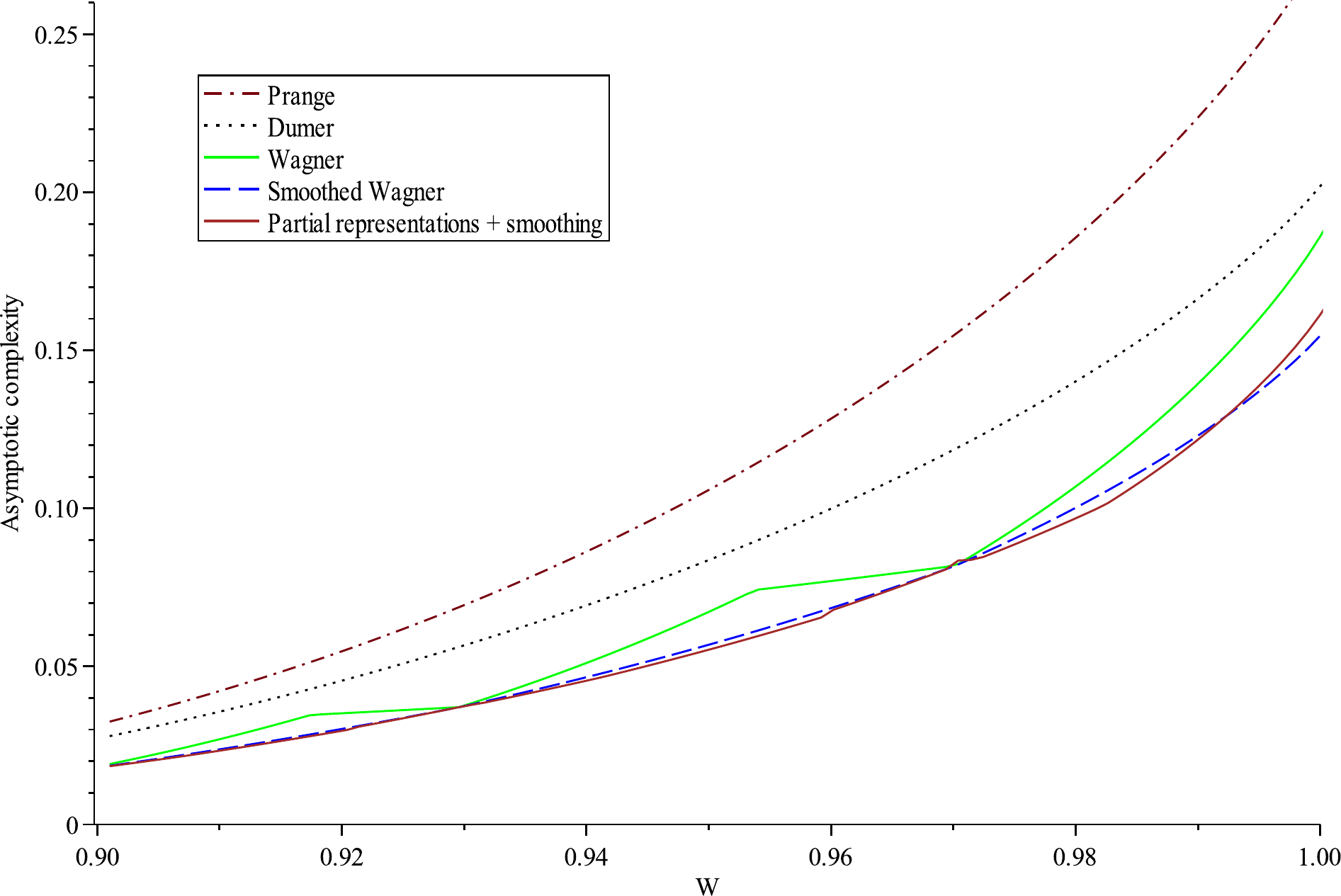}
	\caption{Comparison of the exponent complexities for $R = 0.5$}
	\label{fig:complexity_R5}
\end{figure}

\begin{figure}[!h!] \centering
	\includegraphics[width = 12cm]{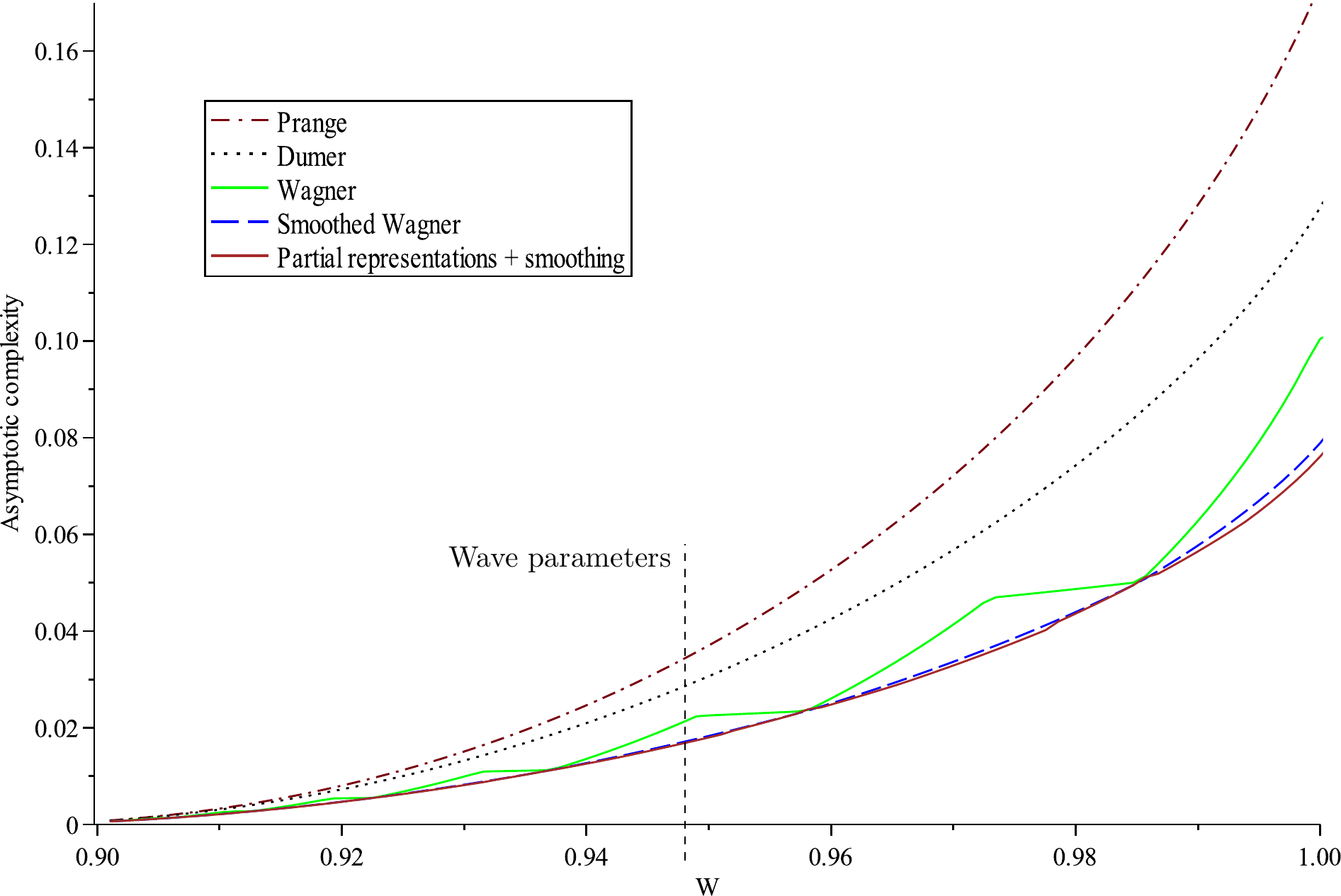}
	\caption{Comparison of the exponent complexities for $R = 0.676$}
	\label{fig:complexity_R676}
\end{figure}

\section{New Parameters for the WAVE Signature Scheme}
\label{sec:Wave}
\wave\ is a new code-based signature scheme proposed in
\cite{DST18}. It uses a \emph{hash-and-sign} approach and follows the
GPV paradigm \cite{GPV08} with the instantiation of a code-based
preimage sampleable family of functions.

Forging a signature in the Wave scheme amounts to solving the SD
problem. Roughly speaking, the public key is a specific pseudo-random
parity-check matrix $\Hm$ of size $(n-k)\times n$ and the signature of
a message $\mv$ is an error $\ev$ of weight $w$ such that
$\ev\transpose{\Hm} = h(\mv)$ with $h$ a hash function.  However,
instead of trying to forge a signature for one message of our choice,
a natural idea is to try to forge one message among a selected set of
messages. This context leads directly to a slight variation of the
classical SD problem. Instead of having one syndrome, there is a list
of possible syndromes and the goal is to decode one of them.  This
problem is known as the \emph{Decoding One Out of Many} (DOOM)
problem.

\begin{restatable}{problem}
	{problemDOOM}[Decoding One Out of Many - $\DOOM(n,z,q,R,W)$]
	\label{prob:DOOMH}
	
	\begin{tabular}{ll} 
		Instance: &
		\quad $\Hm\in\F_{q}^{(n-k)\times n}$ of full rank,\\ &
		\quad $\sv_1,\cdots,\sv_z\in\F_{q}^{n-k}$. \\ 
		Output: & \quad $\ev \in\F_{q}^n \textrm{ and } i \in \llbracket 1,z \rrbracket$ such that
		$\wt{\ev}=w$ and $\ev\transpose{\Hm} =\sv_i$,
	\end{tabular} $ \ $
	
	where $k \eqdef \lceil Rn\rceil$, $w \eqdef \lceil W n\rceil$
	and $\wt{\ev} \eqdef |\{i : \ev_i \neq0\}|$.
\end{restatable}

This problem was first considered in \cite{JJ02} and later analyzed
for the binary case $(q=2)$ in \cite{S11,DST17a}. These papers show
that one can solve the DOOM problem with an exponential speed-up
compared to the $\SD$ problem with equivalent parameters.

The difference induced by DOOM on the PGE+SS framework is that it
increases the search space. Namely, instead of searching a solution
$\ev$ of weight $w$ in the space
$\left\{ \ev \; : \; \ev\transpose{\Hm} = \sv \right\}$ we search in
$\cup_{i=1}^{z} \left\{ \ev \; : \; \ev\transpose{\Hm} = \sv_i
\right\}$. 

The idea to solve this problem with Wagner's approach is to take
$z \geq 3^{\ell/a}$ and replace the bottom-right list of the tree
$\Lc_{2^a}$ by a list containing all the syndromes. Hence, there are
only $2^{a}-1$ lists to generate from the search space. Therefore, the
constraint of Theorem \ref{Theorem:Wagner} becomes
\begin{equation*}
  3^{\ell/a} \leq 2^{(k+\ell)/(2^{a}-1)}.
\end{equation*}

For the practical parameters, we have $a = 6$ or $a=7$ so the change
from $2^a$ to $2^a-1$ has a negligible impact when we adapt the
representation technique to the DOOM problem.

The DOOM parameters stated in \cite{DST18} are derived from the
complexity of a key attack detailed in the \wave\ paper. Our result
stated in Section \ref{subsec:partial-rep-complexity} provides another
attack to consider. We computed the minimal parameters for the \wave\
scheme so that both attacks would have a time complexity of at least
$2^{128}$. They are stated in Table \ref{tab:wave-parameters} where
$n$ is the length of code used in Wave, $k$ its dimension and $w$ the
weight of the signature. These should be considered as the new
parameters to use for the \wave\ scheme.

\begin{table}[h!]  \centering
	\begin{tabular}{c c c}
          \toprule
          \quad $(n,k,w)$ \quad & \quad Public key size (in MB) \quad & \quad Signature length (in kB) \quad    \\
          \midrule
          (7236,4892,6862)&  2.27 &  1.434 \\ 
          \bottomrule
	\end{tabular}
	\vspace*{+0.3cm}
	\caption{New parameters of the \wave\ signature scheme for $128$ bits of security.}
        \label{tab:wave-parameters}
\end{table}

\section{Hardest Instances of Ternary Syndrome Decoding}
\label{sec:Hard}
In the previous sections, we tried to optimize our algorithms for the
regime of parameters used in the {\wave} signature scheme. The
corresponding Syndrome Decoding problem uses $R = 0.676$ and
$w \approx 0.948$. This corresponds to a regime where there are many
solutions to the problem and hence Wagner's algorithm with a large
number of floors was efficient. However, this is not the problem where
the problem is the hardest.

We will now look at the hardest instances of the ternary Syndrome
Decoding in large weight. As we already teased in the introduction,
ternary SD is much harder in large weights than in small weights. In
the two examples we considered, namely $R=0.5$ and $R=0.676$, the
problem was the hardest for $W = 1$. As we will see, there are some
lower rates for which the complexity of the Syndrome Decoding problem
is maximal for $W < 1$.

Consider an instance $(\Hm,\sv)$ of SD$(3,R,W)$ with
$W \ge \frac{2}{3}$. We have $\Hm\in\F_{3}^{(n-k)\times n}$ of full
rank and $\sv\in\F_{3}^{n-k}$. As in the binary case, the problem is
the hardest when it has a unique solution on average, if such a regime
exists.

Let $R_{\textup{max}} \eqdef \frac{\log_2(3) - 1}{\log_2(3)} \approx
0.36907$.  For $R \in [0,R_{\textup{max}}]$, we define
$W_{\textup{GV}}^{\textup{high}}(R)$ as the only value in $[2/3,1]$ such that
\[W_{\textup{GV}}^{\textup{high}}(R) + h_2(W_{\textup{GV}}^{\textup{high}}(R))
= (1-R)\log_2(3),\]
where $h_2(x) \eqdef -x\log_2(x) - (1-x)\log_2(1-x)$. 

The rate $R_{\textup{max}}$ was defined such that
$W_{\textup{GV}}^{high}(R_{\textup{max}}) = 1$, while this quantity is
not defined for $R > R_{\textup{max}}$. This is why Figures
\ref{fig:complexity_R5} and \ref{fig:complexity_R676} do not show a
high peak for $R = 0.5$ and $R=0.676$ but an increasing function up to
$W=1$. By Proposition \ref{propo:GVdist}, quantity
$W_{\textup{GV}}^{\textup{high}}(R)$ corresponds to the relative
weight for which we expect one solution to $SD$ with rate $R$ and
$q = 3$.
  
In order to study the above problem for hard high weight instances, we
compared the performance of $3$ standard algorithms: Prange's
algorithm, Dumer's algorithm and the BJMM algorithm.

We performed a case study and showed that, for all the above-mentioned
algorithms, the hardest case is reached for
$R = R_{\textup{max}} \approx 0.36907$ and $W = 1$. We obtain the
following results.

\begin{table}[h!]  \centering
	\begin{tabular}{c c c}
		\toprule
		\quad Algorithm \quad  & \quad $q=2$ \quad  & 
		\quad  $q = 3$ and $W > 0.5$ \quad  \\
		\midrule
		Prange & 0.121 ($R$ = 0.454) & 0.369 ($R$ = 0.369)  \\ 
		Dumer/Wagner & 0.116 ($R$ = 0.447) & 0.269 ($R$ = 0.369) \\ 
		BJMM/our algorithm & 0.102 ($R$ = 0.427)  & 0.247 ($R$ = 0.369) \\ 
		\bottomrule
	\end{tabular}
	\vspace*{+0.3cm}
	\caption{Best exponents with associated rates. \label{table:size2}}
\end{table}

In both the binary and the ternary case, we can see that Prange's
algorithm performs very poorly, but that Dumer's algorithm already
gives much better results and that BJMM's Subset Sum techniques, using
representations, increases the gain. The analysis of Prange and Dumer
for $q=3$ is quite straightforward and follows closely the binary
case. For BJMM (\textit{i.e.} Wagner's algorithm with
representations), the exponent $0.247$ comes from a $2$-levels Wagner
tree that includes $1$ layer of representations. We tried using a larger
Wagner trees but this did not give any improvement.

The ternary SD appears significantly harder than its binary
counterpart.  This was expected to some extent because in the ternary
case, the input matrices have elements in $\F_3$ and not $\F_2$, which
means that matrices of the same dimension contain more
information.

In order to confirm this idea, we define the following metric: what is
the smallest input size for which the algorithms need at least
$2^{128}$ operations to decode?  We use the value $128$, as $128$
security bits is a cryptographic standard. The input matrix
$\Hm \in \F_3^{n(1-R) \times n}$ is represented in systematic
form. This means that we write
 \[ \Hm =
\begin{pmatrix} \un_{n(1-R)} & \Hm'
\end{pmatrix}.
\] 
The only relevant part that needs to be specified is $\Hm'$. This
requires $R(1-R)n^2\log_2(q)$ bits.  We show that, even in this
metric, the ternary syndrome decoding problem is much harder,
\textit{i.e.} requires $2^{128}$ operations to decode inputs of much
smaller sizes. Our results are summarized in the table below.
 
\begin{table}[h!]  \centering
	\begin{tabular}{c c c}
		\toprule
		\quad Algorithm \quad  & \quad $q=2$ \quad  & 
		\quad  $q = 3$ and $W > 0.5$ \quad  \\
		\midrule
		Prange & 275 ($R$ = 0.384) & 44 ($R$ = 0.369) \\ 
		Dumer/Wagner & 295 ($R$ = 0.369) & 83 ($R$ = 0.369) \\ 
		BJMM/our algorithm & 374 ($R$ = 0.326) & 99 ($R$ = 0.369) \\ 
		\bottomrule
	\end{tabular}
	\vspace*{+0.3cm}
	\caption{Minimum input sizes (in Kbits) for a time complexity of $2^{128}$. \label{table:size3}}
\end{table}
 
Notice that in this metric, in the binary case, it is worth reducing
the rate $R$, as this reduces the input size. But in the ternary case,
we do not observe this behavior, which shows that the problem quickly
becomes simple, as $R$ decreases.

The work we present here is very preliminary but opens many new
perspectives. It seems there are many cases in code-based
cryptography, from encryption schemes to signatures, where this
problem could replace the binary Syndrome Decoding problem to get
smaller key sizes.

\section{Conclusion}
In this work, we stressed a strong difference between the cases $q=2$
and $q \geq 3$ of the Syndrome Decoding problem. Namely, the symmetry
between the small weight and the large weight cases, which occurs in
the binary case, is broken for larger values of $q$. The large weight
case of the general Syndrome Decoding problem had never been studied
before. We proposed two algorithms to solve the Syndrome Decoding
problem in this new regime in the context of the \textit{Partial
Gaussian Elimination and Subset Sum} framework. Our first algorithm
uses a $q$-ary version of Wagner's approach to solve the underlying
Subset Sum problem. We proposed a second algorithm making use of
representations as in the BJMM approach. We studied both algorithms
and proposed a first application for cryptographic purposes, namely
for the \wave\ signature scheme. Considering our complexity analysis,
we proposed new parameters for this scheme. Furthermore, we showed
that the worst case complexity of Syndrome Decoding in large weight is
higher than in small weight. This implies that it should be possible
to develop new code-based cryptographic schemes using this regime of
parameters that reach the same security level with smaller key size.

\newpage
\bibliographystyle{alpha}
\bibliography{codecrypto}

\newpage
\appendix
\section{Appendix: Ternary Representations}
\label{sec:ternaryReps}
In this section, we explain how to compute the number of
representations as well as the number of badly-formed vectors when
using ternary representations.

\subsection{Notations}

The notation $\bin{n}{k_1, \dots, k_i}$ will denote the multinomial
coefficient $\dfrac{n!}{k_1! \dots k_i!}$, assuming that $n = k_1 +
\dots + k_i$.

Let us denote $g : (n, k_1, k_2) \rightarrow n\log_2(n) - k_1\log_2(k_1)
- k_2\log_2(k_2) - (n-k_1-k_2)\log_2(n-k_1-k_2)$. We have:
\[\bin{n}{k_1, k_2, n-k_1-k_2} = \OOt{2^{g(n, k_1, k_2)}}.\]

The function $g$ satisfies :
\begin{itemize}
\item $g(n, k_1, k_2) = g(n, k_2, k_1)$,
\item $g(n, k_1, k_2) = g(n, k_1, n-k_1-k_2)$,
\item $g(\lambda n, \lambda k_1, \lambda k_2) = \lambda g(n, k_1,
k_2)$,
\item $g(n, k_1, k_2) =
n h_2\left(\frac{k_1+k_2}{n}\right)+(k_1+k_2) h_2\left(\frac{k_1}{k_1+k_2}\right)$,
where $h_2$ stands for the binary entropy.
\end{itemize}

We denote by $T(n, \alpha, \beta)$ the set of all vectors of length $n$
composed with $\alpha n$ $1$s, $\beta n$ $2$s and $(1-\alpha-\beta)n$
$0$s. There exist $\bin{n}{\alpha n$, $\beta n$,
  $(1-\alpha-\beta)n} = \OOt{2^{ng(1, \alpha, \beta)}}$ such vectors.

\subsection{Main Result}
The goal of this section is to prove the following result.

\begin{proposition}\label{Prop:Reps!} For $\mathbf{b} \in T(n,
\alpha_0, \beta_0)$, the number of way one can decompose $\mathbf{b}$
as the sum of two vectors from $T(n, \alpha_1, \beta_1)$ is given by:
\[\OOt{ 2^{n(g(1-\alpha_0-\beta_0, \overline{x}_{12}, \overline{x}_{12}) + g(\alpha_0, \overline{x}_{01}, \overline{x}_{01}) + g(\beta_0, \overline{x}_{02}, \overline{x}_{02}))} },\]
where
\[\begin{array}{cl}
\overline{x}_{01} =& \dfrac{2\alpha_0 + \beta_0 - \alpha_1 -
2\beta_1}{3} +z\\ \overline{x}_{02} =& \dfrac{\alpha_0 + 2\beta_0 -
2\alpha_1 - \beta_1}{3} +z\\ \overline{x}_{12} =& z\\
\end{array}\]
 and $z$ is the real root of
\[\dfrac{
( 2\alpha_0 + \beta_0 - \alpha_1 - 2\beta_1 +3z ) ( \alpha_0 +
2\beta_0 - 2\alpha_1 - \beta_1 +3z ) z} {( 1-\alpha_0-\beta_0 - 2z ) (
-2\alpha_0 - \beta_0 +4 \alpha_1 +2\beta_1 - 6z ) ( -\alpha_0 -
2\beta_0 +2 \alpha_1 +4\beta_1 - 6z )} = 1.\]
\label{Nrep}
\end{proposition}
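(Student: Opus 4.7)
The plan is to classify each coordinate of $\mathbf{b}$ by the ordered pair $(b_{1,i}, b_{2,i})$, reduce to a one-parameter family via the $\mathbf{b}_1 \leftrightarrow \mathbf{b}_2$ symmetry, and extract the cubic from a Lagrangian computation. For any $\mathbf{b} \in T(n,\alpha_0,\beta_0)$, the constraint $b_{1,i}+b_{2,i} \equiv b_i \pmod 3$ leaves exactly three admissible ordered pairs $(b_{1,i},b_{2,i})$ at each coordinate, depending only on $b_i$: the pairs $(0,0),(1,2),(2,1)$ when $b_i=0$; $(0,1),(1,0),(2,2)$ when $b_i=1$; and $(0,2),(2,0),(1,1)$ when $b_i=2$. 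Let $\nu_{jk}$ denote the fraction of positions with $(b_{1,i},b_{2,i})=(j,k)$. The total number of representations is
\[
\sum \binom{(1-\alpha_0-\beta_0)n}{\nu_{00}n,\nu_{12}n,\nu_{21}n} \binom{\alpha_0 n}{\nu_{01}n,\nu_{10}n,\nu_{22}n} \binom{\beta_0 n}{\nu_{02}n,\nu_{20}n,\nu_{11}n},
\]
summed over all admissible tuples satisfying the three row sums fixing the type of $\mathbf{b}$ together with the six further linear constraints enforcing $\mathbf{b}_1, \mathbf{b}_2 \in T(n,\alpha_1,\beta_1)$.

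The symmetry $(\mathbf{b}_1,\mathbf{b}_2) \mapsto (\mathbf{b}_2,\mathbf{b}_1)$ acts by exchanging each pair $(\nu_{jk},\nu_{kj})$ with $j\neq k$ and leaves both the summand and the feasible polytope invariant. Since each multinomial coefficient is strictly log-concave in its arguments, the unique interior maximizer must be a fixed point of this swap, so $\nu_{12}=\nu_{21}$, $\nu_{01}=\nu_{10}$ and $\nu_{02}=\nu_{20}$ at the optimum; denote these common values by $\overline{x}_{12},\overline{x}_{01},\overline{x}_{02}$. The row sums then give $\nu_{00}=1-\alpha_0-\beta_0-2\overline{x}_{12}$, $\nu_{22}=\alpha_0-2\overline{x}_{01}$ and $\nu_{11}=\beta_0-2\overline{x}_{02}$, while the remaining column sums collapse to two independent linear equations in $(\overline{x}_{01},\overline{x}_{02},\overline{x}_{12})$ whose unique solution in terms of the free parameter $z=\overline{x}_{12}$ is precisely the stated parametrization.

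The remaining step is the one-dimensional maximization in $z$ of the function $G(z)$ equal to the sum of the three $g$-values appearing in the proposition. Stirling gives $\log_2(\text{summand}) = n\,G(z)+O(\log n)$, so by Laplace's method the whole sum is $\tilde{O}(2^{n G(z^*)})$ at the interior maximizer $z^*$. A direct computation yields $\partial_x g(N,x,x)=2\log_2((N-2x)/x)$, so the gradient of $G$ with respect to $(\overline{x}_{01},\overline{x}_{02},\overline{x}_{12})$ equals $2\bigl(\log_2(\nu_{22}/\overline{x}_{01}),\log_2(\nu_{11}/\overline{x}_{02}),\log_2(\nu_{00}/\overline{x}_{12})\bigr)$. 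Writing the Lagrange conditions against the two independent constraint gradients $(1,1,-2)$ and $(1,-2,1)$ and summing the three resulting equations eliminates both multipliers and leaves
\[
\overline{x}_{01}\,\overline{x}_{02}\,\overline{x}_{12} \;=\; \nu_{00}\,\nu_{11}\,\nu_{22},
\]
which, after substituting the explicit expressions in terms of $z$, is exactly the cubic equation in the proposition.

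The main obstacle will be to make the passage from the discrete sum to the asymptotic expression $\tilde{O}(2^{n G(z^*)})$ fully rigorous. One has to verify that the interior of the feasible polytope is non-empty in the parameter range considered, that the cubic has a real root $z^*$ at which all nine fractions $\nu_{jk}$ remain strictly positive, and that log-concavity of $G$ then singles out this root as the unique maximizer; these checks become elementary once the $\nu_{jk}$ are all expressed explicitly in terms of $z$, with the boundary cases (one of the counts vanishing) handled by continuity.
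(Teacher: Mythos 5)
Your proposal is correct and follows essentially the same route as the paper's proof in Appendix~\ref{sec:ternaryReps}: the same classification of coordinates into nine cell densities, the same product of three multinomial coefficients, and the same reduction to the entropy-maximizing typical case, ending in the identical cubic $\overline{x}_{01}\,\overline{x}_{02}\,\overline{x}_{12}=\nu_{00}\,\nu_{11}\,\nu_{22}$. The only (equivalent) differences in execution are that you obtain the symmetry $\nu_{jk}=\nu_{kj}$ from invariance under swapping the two summands together with strict concavity, where the paper verifies $w=0$ by an explicit entropy computation, and that you derive the stationarity condition via Lagrange multipliers where the paper substitutes the one-parameter family and differentiates directly in $z$.
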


\subsection{A Simple Example}

Let us consider a very simple case: we want to decompose a balanced
vector of size $n$ (the number of $0$s, $1$s and $2$s is $n/3$) in two
balanced vectors (\textit{i.e.}
$\alpha_0 = \beta_1 = \alpha_1 = \beta_1 = 1/3$). There are several
ways to achieve this. One solution is that each $0$ is obtain by
$0 + 0$, each $1$ by $2+2$, and each $2$ by $1+1$. There is exactly
only one way to build the vector in this way. Another possibility is
that each case ($0+0$, $1+2$, $2+1$, $1+0$, $2+2$, $0+1$, $2+0$, $0+2$
and $1+1$) happens $n/9$ times. This is the scenario admitting the
maximal number of decompositions: $\OOt{3^n}$. There are many more
possibilities.

The number of representations is the sum of all decompositions for all
the possible scenarios. There are only a polynomial number of
different scenarios. The total number of representations (which is
what we want to determine) is determined, up to a polynomial factor,
by the scenario which gives the maximal number of decompositions. In
this case, there are $\OOt{3^n}$ representations.

Let us check that this is the result givent by Proposition
\ref{Nrep}. Indeed, in this case, $z$ must
satisfy the equation
\[\frac{(3z)(3z)z}{(1/3-2z)(1-6z)(1-6z)} = 1, \text{ or equivalently }
27z^3 = (1-6z)^3.\]

The real root of this equation is $1/9$. Thus we obtain
$\overline{x}_{01} = \overline{x}_{02} = \overline{x}_{12} =
1/9$. Finally, the number of representations is
\[
\OOt{2^{3ng(1/3, 1/9, 1/9)}} = \OOt{2^{3n \times (\log_2(3)/3)}} = \OOt{3^n}.
\]

\subsection{Typical Case}

In general, we have a vector $\bv \in T(n, \alpha_0, \beta_0)$. We want
to decompose it into two vectors of $T(n, \alpha_1, \beta_1)$. Let us
call $x_{00}, \dots, x_{22}$ the density of the nine cases ($0+0$,
$0+1$, $\dots$, $2+2$) as shown in the following table :
\[
\begin{array}{p{3mm}|ccc}
 & 0 & 1 & 2 \\ \hline 
0 & x_{00} & x_{01} & x_{02}\\ 
1 & x_{10} & x_{11} & x_{12}\\
2 & x_{20} & x_{21} & x_{22}
\end{array}
\]

We denote by $\mathcal{A}$ the set of possible such tuples
$x_{00}, \dots, x_{22}$.

Given the target vector $\mathbf{b}$, there are
$\bin{n(1-\alpha_0-\beta_0)}{nx_{00}, nx_{12},
nx_{21}}\bin{n\alpha_0}{nx_{01}, nx_{10},
nx_{22}}\bin{n\beta_0}{nx_{02}, nx_{11}, nx_{20}}$ ways of decomposing
this $\mathbf{b}$ according to $(x_{00}, \dots, x_{22})$. Indeed, a
$0$ in $\mathbf{b}$ can be decomposed as $0+0$ (this happens $n
x_{00}$ times), $1+2$ ($n x_{12}$ times) or $2+1$ ($n x_{21}$
times). As the number of $0s$ in $\mathbf{b}$ is
$n(1-\alpha_0-\beta_0)$, there are
$\bin{n(1-\alpha_0-\beta_0)}{nx_{00}, nx_{12}, nx_{21}}$ ways to
choose the decomposition of each $0$ of $\mathbf{b}$. The choices of
the decompositions of the $1$s and the $2$s give the other two
factors.

For given $\alpha_0, \beta_0, \alpha_1$ and $\beta_1$, the number of
possible decompositions is
\[
\sum\limits_{(x_{00}, \dots, x_{22}) \in
  \mathcal{A}}\bin{n(1-\alpha_0-\beta_0)}{nx_{00}, nx_{12},
  nx_{21}}\bin{n\alpha_0}{nx_{01}, nx_{10},
  nx_{22}}\bin{n\beta_0}{nx_{02}, nx_{11}, nx_{20}}.
\]

Up to a polynomial factorm this is equal to
\[
\sum\limits_{(x_{00}, \dots, x_{22}) \in
  \mathcal{A}}2^{n(g(1-\alpha_0-\beta_0, x_{21}, x_{12}) + g(\alpha_0,
  x_{01}, x_{10}) + g(\beta_0, x_{02}, x_{20}))}.
\]

The largest term (or eventually one of the largest terms) of this sum is
\[
2^{n(g(1-\alpha_0-\beta_0, \overline{x}_{21}, \overline{x}_{12}) + g(\alpha_0, \overline{x}_{01}, \overline{x}_{10}) + g(\beta_0, \overline{x}_{02}, \overline{x}_{20}))},
\]
where $(\overline{x}_{00}, \dots, \overline{x}_{22})$ is called the
\textit{typical case}.

We are interested in this typical case because it gathers a polynomial
fraction of all the possible decompositions. The asymptotic exponent
of the total number of representations is then simply given by the
exponent of the typical case.

\subsection{Computation of the Typical Case}

Given $\alpha_0, \beta_0, \alpha_1$ and $\beta_1$, the following
constraints exist on $x_{00}, \dots, x_{22}$.
\[
\begin{array}{ccl}
x_{00}+x_{01}+x_{02} &=& 1-\alpha_1-\beta_1\\ x_{10}+x_{11}+x_{12} &=&
\alpha_1\\ x_{20}+x_{21}+x_{22} &=& \beta_1\\ \\ x_{00}+x_{10}+x_{20}
&=& 1-\alpha_1-\beta_1\\ x_{01}+x_{11}+x_{21} &=& \alpha_1\\
x_{02}+x_{12}+x_{22} &=& \beta_1\\ \\ x_{00}+x_{12}+x_{21} &=&
1-\alpha_0-\beta_0\\ x_{01}+x_{10}+x_{22} &=& \alpha_0\\
x_{02}+x_{11}+x_{20} &=& \beta_0\\
\end{array}
\]

However, these equations are not independent. Each of the three sets
of three equations implies $x_{00} + \dots + x_{22}=1$. We are
actually left with two degrees of freedom, and any solution can be
written as
\[
\begin{array}{ccccccc}
x_{00} &=& 1-\alpha_0-\beta_0& - &2z\\ 
x_{01} &=& \dfrac{2\alpha_0 + \beta_0 - \alpha_1 - 2\beta_1}{3} & +&z & +&w\\
x_{02} &=& \dfrac{\alpha_0 + 2\beta_0 - 2\alpha_1 - \beta_1}{3} & +&z & -&w\\
x_{10} &=& \dfrac{2\alpha_0 + \beta_0 - \alpha_1 - 2\beta_1}{3} & +&z & -&w\\
x_{11} &=& \dfrac{-2\alpha_0 - \beta_0 +4 \alpha_1 +2\beta_1}{3}&-& 2z\\
x_{12} &=& 0& +&z & +&w\\
x_{20} &=& \dfrac{\alpha_0 + 2\beta_0 -2\alpha_1 - \beta_1}{3} & +&z & +&w\\ 
x_{21} &=& 0& +&z & -&w\\ 
x_{22} &=&\dfrac{-\alpha_0 - 2\beta_0 +2 \alpha_1 +4\beta_1}{3}& -& 2z.
\end{array}
\]

Thus, $\mathcal{A} = \{(x_{00}(w, z), \dots, x_{22}(w, z)) \;|\;
\forall (i, j), x_{ij} \geqslant 0\}$.

\subsubsection*{Determining $w$.}

In a first step, we will show that the typical case must be symmetric
(\textit{i.e.} $\overline{x}_{01}=\overline{x}_{10}$,
$\overline{x}_{02}=\overline{x}_{20}$ and
$\overline{x}_{12}=\overline{x}_{21}$), which means that $w$ must be $0$.  To do so, we consider a pair
$(w, z)$ such that the corresponding $(x_{00}, \dots, x_{22})$ is in
$\mathcal{A}$, and we call $(\tilde{x}_{00}, \dots , \tilde{x}_{22})$
the solution with the same $z$ but $0$ instead of $w$.

As $(x_{00}, \dots, x_{22})$ is in $\mathcal{A}$, all $x_{ij}$ are
positive or zero. This implies that all $\tilde{x}_{ij}$ are positive
or zero. For example, for $\tilde{x}_{01}$ we have:
\[
0 \leqslant \min(x_{01}, x_{10}) = \tilde{x}_{01} - \abs(w) \leqslant \tilde{x}_{01}.
\]

Therefore, $(\tilde{x}_{00}, \dots, \tilde{x}_{22})$ is in $\mathcal{A}$ and
we obtain
\begin{equation}
\label{eq:nrep}
\dfrac{\Nrep(\mathbf{\tilde{x}})}{\Nrep(\mathbf{x})} =
\dfrac{2^{n(g(1-\alpha_0-\beta_0, \tilde{x}_{21}, \tilde{x}_{12}) +
    g(\alpha_0, \tilde{x}_{01}, \tilde{x}_{10}) + g(\beta_0,
    \tilde{x}_{02}, \tilde{x}_{20}))}}{2^{n(g(1-\alpha_0-\beta_0,
    x_{21}, x_{12}) + g(\alpha_0, x_{01}, x_{10}) + g(\beta_0, x_{02},
    x_{20}))}}.
\end{equation}

But we have the following equality.
\[
\begin{array}{ccl}
g(1-\alpha_0-\beta_0, x_{21}, x_{12}) &=& g(1-\alpha_0-\beta_0,
\tilde{x}_{12}+w, \tilde{x}_{12}-w)\\ &=& g(1-\alpha_0-\beta_0,
\tilde{x}_{12}, \tilde{x}_{12}) +
2\tilde{x}_{21}\left(h(1/2+w/\tilde{x}_{12}) - h(1/2) \right).
\end{array}\]

Similarly, we obtain two other formulas.
\[g(\alpha_0, x_{01}, x_{10}) = g(\alpha_0, \tilde{x}_{01}, \tilde{x}_{10}) + 2\tilde{x}_{01}\left(h(1/2+w/\tilde{x}_{01}) - h(1/2) \right),\]
\[g(\beta_0, x_{02}, x_{20}) = g(\beta_0, \tilde{x}_{02}, \tilde{x}_{20}) + 2\tilde{x}_{02}\left(h(1/2+w/\tilde{x}_{02}) - h(1/2) \right).\]

Therefore, we can reduce Equation \ref{eq:nrep} to 
\[\dfrac{\Nrep(\mathbf{\tilde{x}})}{\Nrep(\mathbf{x})} =
2^{2n\left(\tilde{x}_{01}\left(1-h\left(\frac{1}{2}+\frac{w}{\tilde{x}_{01}}\right)\right)
    +
    \tilde{x}_{02}\left(1-h\left(\frac{1}{2}+\frac{w}{\tilde{x}_{02}}\right)\right)
    +
    \tilde{x}_{12}\left(1-h\left(\frac{1}{2}+\frac{w}{\tilde{x}_{12}}\right)\right)\right)}.\]

So $\Nrep(\mathbf{x}) \leqslant \Nrep(\mathbf{\tilde{x}})$ and these
two quantities are equal if and only $w=0$, \textit{i.e.}
$\mathbf{x} = \mathbf{\tilde{x}}$.

\subsubsection*{Determining $z$.}

To get the typical case, we now have to find the value of $z$ that
maximises the expression
\[
g(1-\alpha_0-\beta_0, x_{21}, x_{12}) + g(\alpha_0, x_{01}, x_{10}) +
g(\beta_0, x_{02}, x_{20}).
\]

Notice that this expression is equivalent, up to an additive constant,
to $-\sum\limits_{i, j} x_{ij}\log_2(x_{ij})$.

This function is concave and thus admit a single maximum. The
differentiation of this function with respect to $z$ gives
\[
-2\log_2 \left( \dfrac{ \left( \dfrac{2\alpha_0 + \beta_0 - \alpha_1 -
        2\beta_1}{3} +z \right) \left( \dfrac{\alpha_0 + 2\beta_0 -
        2\alpha_1 - \beta_1}{3} +z \right) z} {\left(
      1-\alpha_0-\beta_0 - 2z \right) \left( \dfrac{-2\alpha_0 -
        \beta_0 +4 \alpha_1 +2\beta_1}{3} - 2z \right) \left(
      \dfrac{-\alpha_0 - 2\beta_0 +2 \alpha_1 +4\beta_1}{3} - 2z
    \right)} \right),
\]
which is equal to zero if and only if
\[
\dfrac{
( 2\alpha_0 + \beta_0 - \alpha_1 - 2\beta_1 +3z ) ( \alpha_0 +
2\beta_0 - 2\alpha_1 - \beta_1 +3z ) z} {( 1-\alpha_0-\beta_0 - 2z ) (
-2\alpha_0 - \beta_0 +4 \alpha_1 +2\beta_1 - 6z ) ( -\alpha_0 -
2\beta_0 +2 \alpha_1 +4\beta_1 - 6z )} = 1.
\]

This explains why $z$ is the root of a polynomial of degree $3$.

\subsection{Number of Representations and Badly-formed Elements}

There are $\OOt{2^{ng(1, \alpha_0, \beta_0)}}$ vectors in
$T(n, \alpha_0, \beta_0)$. For each of these vectors, there are by
definition $\Nrep(\alpha_0, \beta_0, \alpha_1, \beta_1)$ ways of
decomposing it as the sum of two vectors of $T(n, \alpha_1,
\beta_1)$.
Moreover, the number of vectors in $T(n, \alpha_1, \beta_1)$ is
$\OOt{2^{ng(1, \alpha_1, \beta_1)}}$. There are then
$\OOt{2^{2ng(1, \alpha_1, \beta_1)}}$ pairs of vectors of
$T(n, \alpha_1, \beta_1)$, but only
$\OOt{\Nrep(\alpha_0, \beta_0, \alpha_1, \beta_1)2^{ng(1, \alpha_0,
    \beta_0)}}$
of these pairs give a valid representation of a vector of
$T(n, \alpha_0, \beta_0)$. All the other pairs give badly-formed
elements. Thus, when we merge two $L$-sized lists of elements of
$T(n, \alpha_1, \beta_1)$ on $L$ bits, we obtain
$\OOt{L \Nrep(\alpha_0, \beta_0, \alpha_1, \beta_1)2^{ng(1, \alpha_0,
    \beta_0) - 2ng(1, \alpha_1, \beta_1)}}$
vectors of $T(n, \alpha_0, \beta_0)$, the remaining consisting on
badly-formed vectors.

\end{document}